\newcommand{\remove}[1]{}
\newcommand{\vbar}{\:\bigg{|}\:}
\newcommand{\Vbar}{\:\Bigg{|}\:}
\begin{document}

\title{Connections Between Construction D and Related Constructions of Lattices
\thanks{The work of W. Kositwattanarerk was conducted in part while the author was at the Division of Mathematical Sciences, School of Physical and Mathematical Sciences, Nanyang Technological University, Singapore. The research of W. Kositwattanarerk and F. Oggier for this work is supported by the Singapore National Research Foundation under Research Grant NRF-RF2009-07. The material in this paper was presented in part at the International Workshop on Coding and Cryptography, Bergen, Norway, April 2013.}
}


\author{Wittawat Kositwattanarerk         \and
        Fr\'{e}d\'{e}rique Oggier 
}


\institute{Wittawat Kositwattanarerk \at
              Department of Mathematics \\
              Faculty of Science \\
              Mahidol University, Thailand \\
              \email{wittawat.kos@mahidol.ac.th}           
           \and Fr\'{e}d\'{e}rique Oggier \at
              Division of Mathematical Sciences \\
              School of Physical and Mathematical Sciences \\
              Nanyang Technological University, Singapore \\
              \email{frederique@ntu.edu.sg} 
}

\date{Received: date / Accepted: date}

\maketitle

\begin{abstract}
Most practical constructions of lattice codes with high coding gains are multilevel constructions where each level corresponds to an underlying code component. Construction D, Construction D$'$, and Forney's code formula are classical constructions that produce such lattices explicitly from a family of nested binary linear codes. In this paper, we investigate these three closely related constructions along with the recently developed Construction A$'$ of lattices from codes over the polynomial ring $\mathbb{F}_2[u]/u^a$. We show that Construction by Code Formula produces a lattice packing if and only if the nested codes being used are closed under Schur product, thus proving the similarity of Construction D and Construction by Code Formula when applied to Reed-Muller codes. In addition, we relate Construction by Code Formula to Construction A$'$ by finding a correspondence between nested binary codes and codes over $\mathbb{F}_2[u]/u^a$. This proves that any lattice constructible using Construction by Code Formula is also constructible using Construction A$'$. Finally, we show that Construction A$'$ produces a lattice if and only if the corresponding code over $\mathbb{F}_2[u]/u^a$ is closed under shifted Schur product.

\remove{We investigate four closely related constructions of lattices from linear codes: the classical Construction D, Construction D$'$, Construction by Code Formula, and the recently developed Construction A$'$. These constructions have been proven useful and result in efficient encoding and decoding algorithms for Barnes-Wall lattices. Here, we analyze their applications in a general setting. \\
We give an exact condition for codes over polynomial rings to produce a lattice under Construction A$'$. \\
Prompted by these applications, we analyze the three constructions in the general settings. \\
Construction $\overline{\mbox{D}}$ is a generalization that we propose of the construction given in \cite{F}, and Construction A$'$ is introduced in \cite{HVB_ISIT,HVB} as an extension of Construction A to codes over polynomial rings. We prove that Construction $\overline{\mbox{D}}$ and Construction A$'$ do not always produce a lattice, and provide a necessary and sufficient condition for Construction $\overline{\mbox{D}}$ to give a lattice. We also show that any lattices constructible using Construction $\overline{\mbox{D}}$ are also constructible using Construction A$'$.}

\keywords{Lattices \and lattices from codes \and coset codes \and Barnes-Wall lattices \and Schur product of codes}
\end{abstract}

%
%
%
\section{Introduction}

Connections between lattices and linear codes are classically studied (see e.g. \cite{CS}). Lattices constructed from codes often inherit certain properties from the underlying codes and have manageable encoding and decoding complexity. In particular, Construction D and D$'$ of Barnes and Sloane \cite{BS,CS} produce lattice packings from a family of nested binary linear codes where artificial ``levels'' are created by using an increasing power of 2. These constructions are well-known for the construction of Barnes-Wall lattices from Reed-Muller codes. \\

In the classical definition of Construction D and D$'$ \cite{BS,CS}, there are restrictions on the minimum distance of the codes being used. Nonetheless, new classes of codes and recent applications of lattice codes call for a reinvestigation of these conditions. In particular, Construction D$'$ is used in conjunction with low-density parity-check (LDPC) codes to produce what is called LDPC lattices in \cite{SBP}. Construction D is exploited in \cite{SSP}, making possible a construction of lattices from turbo codes. In both applications the restrictions on the minimum distance of the codes are lifted and relaxed constructions are studied. Here, we will also consider the constructions without the minimum distance conditions so that our discussions apply in the general case. \\

Construction by Code Formula is a reformulation of Forney's code formula \cite{F1,F2} where the lattice in consideration is decomposed as
\[\Lambda=C_0+2C_1+\ldots+2^{a-1}C_{a-1}+2^a\mathbb{Z}^n\]
where $a$ is the 2-depth of the lattice and $C_0,\ldots,C_{a-1}$ are binary codes. This decomposition exists for many notable lattices, including $E_8$, Barnes-Wall lattices, and the Leech lattices. While Forney initially introduces the code formula as a special case where a lattice can be decomposed completely into a chain of coset codes, many recent applications take interest in using this code formula as a construction and representation of lattices \cite{HVB_ISIT,HVB,OSB,YLW}. Often, the code formula as a construction of lattices is also called Construction D. To avoid confusion, we use the term Construction D to refer to the original construction given in \cite{BS,CS} and refer to the construction based on code formula as Construction by Code Formula. Indeed, we will show that the two constructions coincide under a certain condition. \\

Construction $A'$ is an extension of Construction A recently proposed by Harshan, Viterbo, and Belfiore \cite{HVB_ISIT,HVB}. This construction combines codes of different levels in a code formula and generates lattices from a single code over a polynomial ring $\mathbb{F}_2[u]/u^a$. It was shown in \cite{HVB_ISIT} that an encoding of Barnes-Wall lattices using Construction A$'$ is equivalent to the encoding using the traditional code formula with Reed-Muller codes. \\

{Prompted by this new wave of interest, in this paper we analyze and find connections between the three constructions of lattices from codes: Construction D, Construction by Code Formula, and Construction A$'$.} In Section 2, we give the definition for Construction D, Construction D$'$, Construction by Code Formula, and an example distinguishing the three constructions and demonstrating that Construction by Code Formula does not always produce a lattice. Section 3 provides a necessary and sufficient condition for Construction by Code Formula to output a lattice and relates this construction to Construction D. The definition for Construction A$'$ is given is Section 4. Here, we discuss the relationships between Construction by Code Formula and Construction A$'$, along with a necessary and sufficient condition for Construction A$'$ to produce a lattice. We conclude the paper in Section 5.

\remove{We shall note that Construction by Code Formula is often times referred to as Construction D in literatures \cite{HVB_ISIT,HVB,YLW}. \\
Here, we make clear the distinction between the classical Construction D as given in \cite{BS,CS} and the construction given in \cite{HVB_ISIT,HVB,YLW} (also referred to as Construction D in the original manuscript) by referring to the latter as Construction by Code Formula. \\
(often times mistakenly referring to this as a Construction D) \\
In \cite{F1}, coset codes are introduced as a class of coded modulation schemes using lattice partition $\Lambda/\Lambda'$ where an encoder selects a coset of the lattice $\Lambda'$ in $\Lambda$. \\
Of particular interests are lattices that are ``decomposable'', i.e. expressible using code formula
Leech lattice and Barnes-Wall lattices are among. \\
(chain of coset decomposition) \\
While \cite{F1} is interested in decomposing lattices into a chain of coset decomposition. \\
The relationship is further developed by Forney \cite{F2} where Barnes-Wall lattices are constructed as a direct sum of Reed-Muller codes. \\
This ubiquitous construction finds use in lattice encoding and decoding \cite{F2,HVB_ISIT,HVB,OSB} and in construction of polar lattices \cite{YLW}. \\
Recently, Harshan, Viterbo, and Belfiore \cite{HVB_ISIT,HVB} introduce Construction A$'$ as an extension of Construction A to codes over polynomial rings and prove the equivalence of an encoding of Barnes-Wall lattices using Construction A$'$ and nested Reed-Muller codes, allowing them to obtain a new efficient encoder for Barnes-Wall lattices. \\
So far, the discussions on the above constructions were limited to Reed-Muller codes and Barnes-Wall lattices. This motivates us to address the construction of \cite{F} and \cite{HVB_ISIT,HVB} to arbitrary codes and lattices. \\
Another lattice construction of interest are the recent construction by Harshan, Viterbo, and Belfiore \cite{HVB_ISIT,HVB} where Construction A is extended to codes over polynomial rings $\mathcal{U}_a$, resulting in what is called Construction $A'$. \\
In this paper we will not impose the minimum distance condition on the codes to which Construction D and D$'$ are applied.}

%
%
%
\section{Definitions and Example}

Let $\psi$ be the natural embedding of $\mathbb{F}_2^n$ into $\mathbb{Z}^n$, where $\mathbb{F}_2$ is the binary field. We first recall the definition of Construction D and D$'$ \cite{CS}. Note that we scale Construction D as suggested in \cite[page 236]{CS} so that the two constructions are comparable.

\begin{definition}[Construction D (scaled)]\label{DefinitionD}
Let $C_0\subseteq{}C_1\subseteq\ldots\subseteq{}C_{a-1}\subseteq{}C_a=\mathbb{F}_2^n$ be a family of nested binary linear codes where the minimum distance of $C_i$ is at least $4^{a-i}/\gamma$ where $\gamma=$ 1 or 2. Let $k_i=\dim(C_i)$ and let $\mathbf{b}_1,\mathbf{b}_2,\ldots,\mathbf{b}_n$ be a basis of $\mathbb{F}_2^n$ such that $\mathbf{b}_1,\ldots,\mathbf{b}_{k_i}$ span $C_i$. The lattice $\Lambda_D$ consists of all vectors of the form
\[\sum_{i=0}^{a-1}{{2^i}\sum_{j=1}^{k_i}{\alpha_j^{(i)}\psi(\mathbf{b}_j)}}+2^a\mathbf{l}\]
where $\alpha_j^{(i)}\in\{0,1\}$ and $\mathbf{l}\in\mathbb{Z}^n$.
\end{definition}

\begin{definition}[Construction D$'$]
Let $C_0\subseteq{}C_1\subseteq\ldots\subseteq{}C_{a-1}$ be a family of nested binary linear codes where the minimum distance of $C_i$ is at least $\gamma\cdot4^{a-i-1}$ where $\gamma=$ 1 or 2. Let $r_i=n-\dim(C_i)$ and {$r_a=0$}. Let $\mathbf{h}_1,\mathbf{h}_2,\ldots,\mathbf{h}_n$ be a basis for $\mathbb{F}_2^n$ such that $C_i$ is defined by the parity-check vectors $\mathbf{h}_1,\ldots,\mathbf{h}_{r_i}$. Let $\Lambda_{D'}$ be the lattice consisting of all vectors $\mathbf{x}\in\mathbb{Z}^n$ satisfying the congruences
\[\mathbf{x}\cdot\psi(\mathbf{h}_j)\equiv0\pmod{2^{i+1}}\]
for all $i\in\{0,\ldots,a-1\}$ and {$r_{i+1}\leq{}j\leq{}r_{i}$}.
\end{definition}

Although the classical definition of Construction D (resp. Construction D$'$) requires that the minimum distance of $C_i$ is at least $4^{a-i}/\gamma$ (resp. $\gamma\cdot4^{a-i-1}$), it is possible to apply the constructions even when the conditions are not met \cite{SBP,SSP}. In this paper, we will relax these minimum distance conditions since they do not affect our discussions on constructions of lattices. Fundamental parameters of lattices from these relaxed constructions can be found in \cite{SBP,SSP}. The next definition gives a multilevel construction of lattices based on the so-called code formula \cite{F1}.

\begin{definition}[Construction by Code Formula\footnote{This construction was earlier used by the name Construction $\overline{\mbox{D}}$ in \cite{KO}.}]
Let $C_0\subseteq{}C_1\subseteq\ldots\subseteq{}C_{a-1}\subseteq{}C_a=\mathbb{F}_2^n$ be a family of nested binary linear codes. Let
\[\Gamma_{CF}=\psi(C_0)+2\psi(C_1)+\ldots+2^{a-1}\psi(C_{a-1})+2^a\mathbb{Z}^n.\]
\end{definition}

{When $a=1$ (i.e., $C_0$ is the only code in consideration), Construction D, Construction D$'$, and Construction by Code Formula coincide and reduce to what is called Construction A \cite{CS}. Hence, Construction A produces lattices from a single binary code, and one may view the three constructions presented here as different generalizations of Construction A. We refer the reader to \cite{CS} for other generalizations, such as Construction A for codes over $\mathbb{Z}_{q}$.}

Forney states that a lattice is not necessarily expressible using this code formula, but most lattices that are useful in practice are \cite{F1}. We will see in Example \ref{Example} in addition that the set $\Gamma_{CF}$ itself may not be a lattice, so we denote by $\Lambda_{CF}$ the smallest lattice that contains $\Gamma_{CF}$. We first state a trivial observation concerning $\Lambda_{CF}$.

\begin{lemma}\label{LemmaBasic}
Let $C_0\subseteq{}C_1\subseteq\ldots\subseteq{}C_{a-1}\subseteq{}C_a=\mathbb{F}_2^n$ be a family of nested binary linear codes. Then, $\Lambda_{D}\subseteq\Lambda_{CF}$.
\end{lemma}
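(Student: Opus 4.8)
The plan is to show directly that every vector in the set $\Lambda_D$ lies in $\Lambda_{CF}$. Since $\Lambda_{CF}$ is by definition a lattice, hence an additive subgroup of $\mathbb{Z}^n$ containing $\Gamma_{CF}$, it suffices to exhibit each individual term of the defining sum of a generic element of $\Lambda_D$ as an element of $\Lambda_{CF}$ and then invoke closure under addition.

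First I would record the elementary inclusions $2^i\psi(C_i)\subseteq\Gamma_{CF}$ for each $i\in\{0,\ldots,a-1\}$ and $2^a\mathbb{Z}^n\subseteq\Gamma_{CF}$. These hold because the zero vector lies in every code $C_j$ and $\psi(\mathbf{0})=\mathbf{0}$, so in the Minkowski sum $\psi(C_0)+2\psi(C_1)+\ldots+2^{a-1}\psi(C_{a-1})+2^a\mathbb{Z}^n$ one may set all summands but one equal to zero. Next, take an arbitrary element $\sum_{i=0}^{a-1}2^i\sum_{j=1}^{k_i}\alpha_j^{(i)}\psi(\mathbf{b}_j)+2^a\mathbf{l}$ of $\Lambda_D$, with $\alpha_j^{(i)}\in\{0,1\}$ and $\mathbf{l}\in\mathbb{Z}^n$. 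For each $i$ and each $j\le k_i$, the basis vector $\mathbf{b}_j$ belongs to $C_i$, so $\psi(\mathbf{b}_j)\in\psi(C_i)$, and hence $2^i\alpha_j^{(i)}\psi(\mathbf{b}_j)\in 2^i\psi(C_i)\subseteq\Gamma_{CF}\subseteq\Lambda_{CF}$ (the case $\alpha_j^{(i)}=0$ is covered since $\mathbf{0}=\psi(\mathbf{0})\in\psi(C_i)$); likewise $2^a\mathbf{l}\in 2^a\mathbb{Z}^n\subseteq\Lambda_{CF}$. Because $\Lambda_{CF}$ is closed under addition, the entire sum lies in $\Lambda_{CF}$, giving $\Lambda_D\subseteq\Lambda_{CF}$.

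There is essentially no serious obstacle here — this is the ``trivial observation'' the text refers to. The one point deserving a little care is that $\psi$ is not additive, so $\psi(C_i)$ is not itself a group, and one cannot argue by comparing generating sets of $\Lambda_D$ and $\Lambda_{CF}$ as $\mathbb{Z}$-modules; the argument must stay at the level of individual vectors and use the group structure of $\Lambda_{CF}$ only at the final step. I would also note that this inclusion uses neither the minimum-distance hypotheses on the $C_i$ nor the fact that $\Lambda_D$ is itself a lattice, so it remains valid in the relaxed setting considered in the paper.
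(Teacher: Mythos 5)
Your proposal is correct and follows essentially the same route as the paper's proof: observe that each summand $2^i\alpha_j^{(i)}\psi(\mathbf{b}_j)$ and $2^a\mathbf{l}$ lies in $\Gamma_{CF}\subseteq\Lambda_{CF}$ and invoke closure of $\Lambda_{CF}$ under addition. Your additional remarks (handling $\alpha_j^{(i)}=0$, the non-additivity of $\psi$, and the irrelevance of the minimum-distance hypotheses) are accurate elaborations of the same argument.
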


\begin{proof}
Let $\mathbf{b}_1,\mathbf{b}_2,\ldots,\mathbf{b}_n$ be a basis of $\mathbb{F}_2^n$ as in Definition \ref{DefinitionD}. Since $\Lambda_{CF}$ is closed under addition and $2^i\psi(\mathbf{b}_j)\in2^i\psi(C_i)$ for each $i$ and $1\leq{}j\leq{}k_i$, we have $\Lambda_{D}\subseteq\Lambda_{CF}$.
\end{proof}

\remove{Since the volume of $\Lambda_D$ is $2^{\sum_{i=1}^a{n-k_i}}$, it follows that the volume of $\Lambda_{CF}$ is at most $2^{\sum_{i=1}^a{n-k_i}}$, with equality when $\Lambda_{D}=\Lambda_{CF}$.}
One may observe that Construction D (resp. Construction D$'$) depends on the choice of generators (resp. parity conditions) of the codes whereas Construction by Code Formula is independent of these choices. We will see in Example \ref{Example} that the three constructions produce distinct lattices and $\Gamma_{CF}$ from Construction by Code Formula may not be a lattice.

\begin{example}\label{Example}
Consider the nested binary linear codes
\[C_0\subseteq{}C_1\subseteq{}C_2\]
where
\begin{align*}
C_0 & =\langle(1,1,0,0),(1,0,1,0)\rangle, \\
C_1 & =\langle(1,1,0,0),(1,0,1,0),(1,0,0,1)\rangle,\mbox{ and} \\
C_2 & =\mathbb{F}_2^4.
\end{align*}
Alternatively, let $\{\mathbf{b}_1,\mathbf{b}_2,\mathbf{b}_3,\mathbf{b}_4\}$ and $\{\mathbf{h}_1,\mathbf{h}_2,\mathbf{h}_3,\mathbf{h}_4\}$ be a basis for $\mathbb{F}_2^4$ where
\[\begin{array}{l}
\left.
\begin{array}{l}
\left.
\begin{array}{l}
\mathbf{b}_1=(1,1,0,0) \\
\mathbf{b}_2=(1,0,1,0)
\end{array}
\right\}\mbox{span }C_0 \\
\:\mathbf{b}_3=(1,0,0,1)
\end{array}
\right\}\mbox{span }C_1 \\
\:\:\mathbf{b}_4=(1,0,0,0)
\end{array}
\]
and
\[\begin{array}{l}
\left.
\begin{array}{l}
\left.
\begin{array}{l}
\mathbf{h}_1=(1,1,1,1) \\
\end{array}
\right\}\mbox{check }C_1 \\
\:\mathbf{h}_2=(0,0,0,1)
\end{array}
\right\}\mbox{check }C_0 \\
\:\:\mathbf{h}_3=(1,0,0,0) \\
\:\:\mathbf{h}_4=(0,1,0,0).
\end{array}\]
Then,
\[\Lambda_D=\left\{
\begin{array}{l}
(\alpha_1^{(0)}(1,1,0,0)+\alpha_2^{(0)}(1,0,1,0)) \\
+2(\alpha_1^{(1)}(1,1,0,0)+\alpha_2^{(1)}(1,0,1,0)+\alpha_3^{(1)}(1,0,0,1)) \\
+4\mathbf{l}
\end{array}
\Vbar
\begin{array}{l}
\alpha_j^{(i)}\in\{0,1\}, \\
\mathbf{l}\in\mathbb{Z}^4
\end{array}
\right\}\]
and
\[{\Lambda_{D'}=\left\{\mathbf{x}\in\mathbb{Z}^4\vbar
\begin{array}{l}
(0,0,0,1)\cdot\mathbf{x}\equiv0\pmod2,\mbox{ and} \\
(1,1,1,1)\cdot\mathbf{x}\equiv0\pmod4
\end{array}
\right\}}.\]
Also,
\begin{align*}
\Gamma_{CF} & =\psi(C_0)+2\psi(C_{1})+4\mathbb{Z}^4 \\
& = \{\mathbf{c}_0+2\mathbf{c}_1+4\mathbf{l}\mid\mathbf{c}_0\in\psi(C_0),\mathbf{c}_1\in\psi(C_1),\mbox{and }\mathbf{l}\in\mathbb{Z}^4\}.
\end{align*}

{We will show next that \textit{i)} $\Gamma_{CF}\subsetneqq\Lambda_{CF}$, \textit{ii)} $\Lambda_D\subsetneqq\Lambda_{CF}$, \textit{iii)} $\Lambda_D\neq\Lambda_{D'}$, and \textit{iv)} $\Lambda_{D'}\subsetneqq\Lambda_{CF}$.}
\begin{description}
\item[{\textit{i)} $\Gamma_{CF}\subsetneqq\Lambda_{CF}$:}] Since every entry of the vectors in $\psi(C_0)+2\psi(C_{1})$ is at most 3, every element of $\Gamma_{CF}$ reduces mod 4 to an element in $\psi(C_0)+2\psi(C_{1})$. Now, since $(1,1,0,0)$, $(1,0,1,0)$, $(0,1,1,0)\in\Gamma_{CF}$ but $(2,0,0,0)=(1,1,0,0)+(1,0,1,0)-(0,1,1,0)$ does not reduce mod 4 to an element in $\psi(C_0)+2\psi(C_{1})$, we can conclude that $(2,0,0,0)\notin\Gamma_{CF}$ and $\Gamma_{CF}$ is not a lattice. Therefore, $\Gamma_{CF}\subsetneqq\Lambda_{CF}$.
\item[{\textit{ii)} $\Lambda_D\subsetneqq\Lambda_{CF}$:}] It follows from Lemma \ref{LemmaBasic} that $\Lambda_D\subseteq\Lambda_{CF}$. Now, every element $(a_1,a_2,a_3,a_4)$ of $\Lambda_D$ must satisfy
\begin{equation}\label{EquationD}
a_1\equiv{}a_2+a_3+a_4\pmod4.
\end{equation}
However, since $(0,1,1,0)\in\Lambda_{CF}$ does not satisfy (\ref{EquationD}), we conclude that $\Lambda_D\subsetneqq\Lambda_{CF}$.
\item[{\textit{iii)} $\Lambda_D\neq\Lambda_{D'}$:}] The vector $(1,3,0,0)\in\Lambda_{D'}$ also does not satisfy (\ref{EquationD}), so $\Lambda_{D'}\not\subset\Lambda_D$. Furthermore, $(1,1,0,0)\in\Lambda_D$ does not satisfy the modulo system defining $\Lambda_{D'}$, so we have $\Lambda_D\not\subset\Lambda_{D'}$. We conclude that there is no set relation between $\Lambda_D$ and $\Lambda_{D'}$.
\item[{\textit{iv)} $\Lambda_{D'}\subsetneqq\Lambda_{CF}$:}] The lattice $\Lambda_{D'}$ has generators $(1,-1,0,0)$, $(1,0,-1,0)$, $(2,0,0,2)$, and $(4,0,0,0)$, all of which is in $\Lambda_{CF}$, so $\Lambda_{D'}\subseteq\Lambda_{CF}$. However, $(1,1,0,0)\in\Lambda_{CF}$ does not satisfy the modulo system defining $\Lambda_{D'}$, so we have $\Lambda_{D'}\subsetneqq\Lambda_{CF}$.
\end{description}
Hence, we conclude that the lattices $\Lambda_D$ and $\Lambda_{D'}$ are not the same and are strictly contained in $\Lambda_{CF}$. {We note, though, that in general $\Lambda_{CF}$ contains $\Gamma_{CF}$ and $\Lambda_D$, but not necessarily $\Lambda_{D'}$.}

\remove{Since the only element in $\psi(C_0)+2\psi(C_{1})$ that is not in $\Lambda_D$ is $(0,1,1,0)$ and $(0,1,1,0)=(1,1,0,0)+(-1,0,1,0)$ where $(1,1,0,0)\in\Lambda_D$ and $(-1,0,1,0)\in\Lambda_{D'}$, we conclude that $\Lambda_{CF}=\Lambda_D+\Lambda_{D'}$ (see also item 4 in Remark \ref{Remark}).}
\end{example}

%
%
%

\section{Connections between Construction D and Construction by Code Formula}

In this section, we give an explicit description of the lattices $\Lambda_{CF}$ constructed using Construction by Code Formula and further relate this construction to Construction D using Schur product of codes. Denote by $\ast$ componentwise multiplication (known also as Schur product or Hadamard product). That is, for $\mathbf{x}=(x_1,\ldots,x_n),\mathbf{y}=(y_1,\ldots,y_n)\in\mathbb{F}_2^n$, we have
\[\mathbf{x}\ast\mathbf{y}:=(x_1y_1,\ldots,x_ny_n)\in\mathbb{F}_2^n.\]
It is not hard to see that
\remove{$\mbox{supp}(\mathbf{x}\ast\mathbf{y})=\mbox{supp}(\mathbf{x})\cap\mbox{supp}(\mathbf{y})$ and \\
Here, note that the addition $\mathbf{x}+\mathbf{y}$ is over $\mathbb{F}_2$.}
\begin{equation}\label{EquationAst}
\psi(\mathbf{x})+\psi(\mathbf{y})=\psi(\mathbf{x}+\mathbf{y})+2\psi(\mathbf{x}\ast\mathbf{y})
\end{equation}
where additions are taken over the respective spaces.

We say that a family of nested binary linear codes $C_0\subseteq{}C_1\subseteq\ldots\subseteq{}C_{a-1}\subseteq{}C_a=\mathbb{F}_2^n$ is closed under Schur product if and only if the Schur product of any two codewords of $C_i$ is contained in $C_{i+1}$ for all $i$. In other words, if $\mathbf{c}_1,\mathbf{c}_2\in{}C_i$, then $\mathbf{c}_1\ast\mathbf{c}_2\in{}C_{i+1}$ for all $i=0,\ldots,a-1$. We now look at lattice closure of the code formula $\Gamma_{CF}=\psi(C_0)+2\psi(C_1)+\ldots+2^{a-1}\psi(C_{a-1})+2^a\mathbb{Z}^n$.

\begin{proposition}
Let $C_0\subseteq{}C_1\subseteq\ldots\subseteq{}C_{a-1}\subseteq{}C_a=\mathbb{F}_2^n$ be a family of nested binary linear codes. The smallest lattice $\Lambda_{CF}$ containing $\Gamma_{CF}$ consists of all vectors of the form
\begin{equation}\label{LatticeDbar}
\sum_{i=0}^{a-1}{{2^i}\sum_{\mathbf{c}_j\in{}C_i}{\alpha_j^{(i)}\psi(\mathbf{c}_j)}}+2^a\mathbf{l}
\end{equation}
where $\alpha_j^{(i)}\in\{0,1\}$ and $\mathbf{l}\in\mathbb{Z}^n$.
\end{proposition}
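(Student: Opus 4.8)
The plan is to show the two-way inclusion between $\Lambda_{CF}$ and the set, call it $\Lambda^\ast$, of all vectors of the form \eqref{LatticeDbar}. The easy direction is $\Lambda^\ast\subseteq\Lambda_{CF}$: every summand $2^i\psi(\mathbf{c}_j)$ with $\mathbf{c}_j\in C_i$ lies in $2^i\psi(C_i)\subseteq\Gamma_{CF}$, and $2^a\mathbf{l}\in 2^a\mathbb{Z}^n\subseteq\Gamma_{CF}$; since $\Lambda_{CF}$ is a lattice (hence closed under integer linear combinations, and in particular under addition), any such finite sum lies in $\Lambda_{CF}$. So $\Lambda^\ast\subseteq\Lambda_{CF}$.

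For the reverse inclusion it suffices to prove that $\Lambda^\ast$ is itself a lattice containing $\Gamma_{CF}$, because $\Lambda_{CF}$ is by definition the \emph{smallest} such lattice. That $\Gamma_{CF}\subseteq\Lambda^\ast$ is immediate: a generic element $\sum_i 2^i\psi(\mathbf{c}^{(i)})+2^a\mathbf{l}$ of $\Gamma_{CF}$ with $\mathbf{c}^{(i)}\in C_i$ is already in the prescribed form (take the single codeword $\mathbf{c}^{(i)}$ at level $i$ with coefficient $1$). It is also clear that $\Lambda^\ast$ is closed under negation and under multiplication by integers (absorb a factor of $2$ into the next level, using the nesting $C_i\subseteq C_{i+1}$, and note $2^a\mathbb{Z}^n$ handles the overflow at the top level; negation is handled similarly since $-\psi(\mathbf{c})=\psi(\mathbf{c})-2\psi(\mathbf{c})$ and $\mathbf{c}\in C_i\subseteq C_{i+1}$, or more simply since $\Lambda^\ast$ visibly contains $2^i\psi(C_i)$ and $-2^i\psi(\mathbf{c})\equiv 2^i\psi(\mathbf{c}) \pmod{2^{i+1}\mathbb{Z}^n}$ which is again absorbed). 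So the one nontrivial point is closure under addition.

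The main obstacle, then, is this: given two vectors in the form \eqref{LatticeDbar}, their sum has, at each level $i$, a sum of terms $\psi(\mathbf{c})$ with $\mathbf{c}\in C_i$, and I must rewrite this so that each level again contributes binary ($\{0,1\}$) combinations of embedded codewords. The key tool is identity \eqref{EquationAst}: $\psi(\mathbf{x})+\psi(\mathbf{y})=\psi(\mathbf{x}+\mathbf{y})+2\psi(\mathbf{x}\ast\mathbf{y})$, which lets me collapse $\psi(\mathbf{x})+\psi(\mathbf{y})$ at level $i$ into $\psi(\mathbf{x}+\mathbf{y})$ at level $i$ plus a carry $2\psi(\mathbf{x}\ast\mathbf{y})$ pushed to level $i+1$. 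The plan is to process levels from bottom ($i=0$) upward: at each level combine the accumulated terms two at a time via \eqref{EquationAst} (and note that the $\{0,1\}$ coefficients $\alpha_j^{(i)}$ mean we just have a multiset of codewords $\psi(\mathbf{c}_j)$ to add), reducing the level-$i$ content to a sum of \emph{distinct} embedded codewords of $C_i$ with coefficients in $\{0,1\}$, while carries land in level $i+1$ where they are legitimate since $C_i\ast C_i$ need not lie in $C_i$ but the identity only requires the carried vector $\mathbf{x}\ast\mathbf{y}$ to be embedded via $\psi$ with an integer coefficient — and crucially, at level $i+1$ we allow \emph{arbitrary} integer-coefficient combinations of $\psi$ of \emph{all} vectors of $C_i\subseteq C_{i+1}$? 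Here I must be careful: the statement restricts level-$i+1$ terms to codewords of $C_{i+1}$, and $\mathbf{x}\ast\mathbf{y}\in\mathbb{F}_2^n$ need not lie in $C_{i+1}$.

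To handle that subtlety cleanly I would instead argue as follows. Let $\mathbf{x},\mathbf{y}\in C_i$; then $\mathbf{x}\ast\mathbf{y}\in\mathbb{F}_2^n=C_a$, so $2\psi(\mathbf{x}\ast\mathbf{y})$ is fine if $i+1\ge a$; but for $i+1<a$ it may fall outside $C_{i+1}$. However, when we only need $\Lambda^\ast$ to be \emph{closed under addition} (not to match a distance-restricted description), observe that the set $\Lambda^\ast$ as defined actually \emph{does} contain $2\psi(\mathbf{x}\ast\mathbf{y})$ for any $\mathbf{x}\ast\mathbf{y}\in\mathbb{F}_2^n$: write $2\psi(\mathbf{x}\ast\mathbf{y})$ using the top-level freedom — no, $2^a\mathbb{Z}^n$ only covers multiples of $2^a$. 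So the honest fix is to enlarge the bookkeeping: I will prove by induction on the number of summands that any $\mathbb{Z}$-linear combination $\sum_{i=0}^{a-1}2^i\sum_j \beta_j^{(i)}\psi(\mathbf{c}_j^{(i)}) + 2^a\mathbb{Z}^n$ with $\beta_j^{(i)}\in\mathbb{Z}$ and $\mathbf{c}_j^{(i)}\in C_i$ lies in $\Lambda^\ast$; the base case and the reduction of a single coefficient modulo $2$ (pushing the carry up one level via the trivial identity $2^i\psi(\mathbf{c})=2^i\psi(\mathbf{c})$ and $2\cdot 2^i\psi(\mathbf{c})=2^{i+1}\psi(\mathbf{c})$ with $\mathbf{c}\in C_i\subseteq C_{i+1}$) are routine, and the genuinely new step is exactly the application of \eqref{EquationAst} to merge two level-$i$ codewords, whose Schur-product carry $\mathbf{x}\ast\mathbf{y}$ — while possibly outside $C_{i+1}$ — lies in $C_a=\mathbb{F}_2^n$, so after repeatedly applying \eqref{EquationAst} the carry eventually reaches level $a$ and is absorbed into $2^a\mathbb{Z}^n$. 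Tracking this carefully, so that the induction terminates and every intermediate carried codeword lives in some $C_\ell$ with $\ell\le a$, is the crux of the argument; everything else is bookkeeping.
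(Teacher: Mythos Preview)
Your overall two-way inclusion strategy matches the paper's exactly: show $\Lambda^\ast\subseteq\Lambda_{CF}$ by noting each summand lies in $\Gamma_{CF}$, and show $\Lambda_{CF}\subseteq\Lambda^\ast$ by checking $\Lambda^\ast$ is a lattice containing $\Gamma_{CF}$. The paper dispatches the latter in one line (``it is not hard to see that $\Lambda$ is a lattice''), and the reason it is easy is precisely the observation you yourself call ``routine'': $2\cdot 2^i\psi(\mathbf{c})=2^{i+1}\psi(\mathbf{c})$ with $\mathbf{c}\in C_i\subseteq C_{i+1}$, so excess coefficients carry upward until absorbed in $2^a\mathbb{Z}^n$.

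The gap in your write-up is a misreading of the form \eqref{LatticeDbar}. The inner sum $\sum_{\mathbf{c}_j\in C_i}$ ranges over \emph{all} codewords of $C_i$, so at level $i$ you are allowed a $\{0,1\}$-combination of every codeword simultaneously, not a single codeword. Hence when you add two elements of $\Lambda^\ast$, the level-$i$ contribution is a sum $\sum_{\mathbf{c}\in C_i}(\alpha_{\mathbf{c}}+\beta_{\mathbf{c}})\psi(\mathbf{c})$ with each coefficient in $\{0,1,2\}$; whenever a coefficient is $2$ you push that single codeword up one level via the nesting, and you are done. There is no need to merge \emph{distinct} codewords $\mathbf{x}\neq\mathbf{y}$ via \eqref{EquationAst}: $\psi(\mathbf{x})+\psi(\mathbf{y})$ with $\mathbf{x},\mathbf{y}\in C_i$ distinct is already in the prescribed form. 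Your ``genuinely new step'' is therefore unnecessary, and your proposed fix for it---cascading the Schur carry $\mathbf{x}\ast\mathbf{y}$ up to level $a$---does not actually work, since a lone term $2^{i+1}\psi(\mathbf{x}\ast\mathbf{y})$ with $\mathbf{x}\ast\mathbf{y}\notin C_{i+1}$ cannot be promoted further by \eqref{EquationAst} alone. Drop the Schur-product detour; the carry argument you labeled routine is the entire proof that $\Lambda^\ast$ is closed under addition (and your negation argument is fine).
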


\begin{proof}
Let
\[\Lambda=\left\{\sum_{i=0}^{a-1}{{2^i}\sum_{\mathbf{c}_j\in{}C_i}{\alpha_j^{(i)}\psi(\mathbf{c}_j)}}+2^a\mathbf{l}\vbar\alpha_j^{(i)}\in\{0,1\}\mbox{ and }\mathbf{l}\in\mathbb{Z}^n\right\}.\]
It is not hard to see that $\Lambda$ is a lattice and $\Lambda$ contains $\Gamma_{CF}$. Let $\Lambda'$ be a lattice such that $\Gamma_{CF}\subseteq\Lambda'$, and let $\mathbf{v}\in\Lambda$. One may express $\mathbf{v}$ as
\[\sum_{i=0}^{a-1}{{2^i}\sum_{\mathbf{c}_j\in{}C_i}{\alpha_j^{(i)}\psi(\mathbf{c}_j)}}+2^a\mathbf{l}\]
where $\alpha_j^{(i)}\in\{0,1\}$ and $\mathbf{l}\in\mathbb{Z}^n$. Now, since $2^a\mathbf{l}\in\Gamma_{CF}$ and $2^i\alpha_j^{(i)}\psi(\mathbf{c}_j)\in\Gamma_{CF}$ for all $i\in\{0,1,\ldots,a-1\}$ and $\mathbf{c}_j\in{}C_i$, we must have $\mathbf{v}\in\Lambda'$. We conclude that $\Lambda$ is the smallest lattice that contains $\Gamma_{CF}$ and hence $\Lambda=\Lambda_{CF}$. \qed
\end{proof}

\begin{remark}
The expression \eqref{LatticeDbar} of an element in $\Lambda_{CF}$ is often not unique.
\end{remark}

Note that Construction D depends on the chosen basis $\mathbf{b}_1,\mathbf{b}_2,\ldots,\mathbf{b}_n$ of $\mathbb{F}_2^n$ where $\mathbf{b}_1,\ldots,\mathbf{b}_{k_i}$ span $C_i$. Hence, to a choice of codes corresponds (possibly) several lattices. The following corollary states that the sum of all such lattices yields the lattice $\Lambda_{CF}$ from Construction by Code Formula using the same nested codes.

\begin{corollary}\label{CorollaryDinDbar}
Let $\mathfrak{L}$ be the set of all lattices constructible from a family of nested binary linear codes $C_0\subseteq{}C_1\subseteq\ldots\subseteq{}C_{a-1}\subseteq{}C_a=\mathbb{F}_2^n$ using Construction D. Then,
\[\bigoplus_{\Lambda_{D}\in\mathfrak{L}}{\Lambda_{D}}=\Lambda_{CF}.\]
\end{corollary}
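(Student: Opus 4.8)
The plan is to prove the two inclusions $\bigoplus_{\Lambda_D\in\mathfrak{L}}\Lambda_D\subseteq\Lambda_{CF}$ and $\Lambda_{CF}\subseteq\bigoplus_{\Lambda_D\in\mathfrak{L}}\Lambda_D$ separately, using the explicit description of $\Lambda_{CF}$ from the preceding Proposition as the common point of reference. For the first inclusion, Lemma~\ref{LemmaBasic} already gives $\Lambda_D\subseteq\Lambda_{CF}$ for each $\Lambda_D\in\mathfrak{L}$; since $\Lambda_{CF}$ is a lattice (hence closed under addition), the sum (span) of all these sublattices is still contained in $\Lambda_{CF}$, so $\bigoplus_{\Lambda_D\in\mathfrak{L}}\Lambda_D\subseteq\Lambda_{CF}$ is immediate.

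For the reverse inclusion, I would use the explicit form \eqref{LatticeDbar}: a generic element of $\Lambda_{CF}$ is $\sum_{i=0}^{a-1}2^i\sum_{\mathbf{c}_j\in C_i}\alpha_j^{(i)}\psi(\mathbf{c}_j)+2^a\mathbf{l}$. Because $\Lambda_{CF}$ is additively generated by the single terms $2^i\psi(\mathbf{c})$ for $\mathbf{c}\in C_i$ (together with $2^a\mathbb{Z}^n$), it suffices to show that each such generator $2^i\psi(\mathbf{c})$, with $\mathbf{c}\in C_i$, lies in $\bigoplus_{\Lambda_D\in\mathfrak{L}}\Lambda_D$; the $2^a\mathbb{Z}^n$ part lies in every $\Lambda_D$ already. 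Fix $i$ and a codeword $\mathbf{c}\in C_i$. The key observation is that one can choose a basis $\mathbf{b}_1,\dots,\mathbf{b}_n$ of $\mathbb{F}_2^n$ adapted to the chain (so that $\mathbf{b}_1,\dots,\mathbf{b}_{k_\ell}$ span $C_\ell$ for every $\ell$) in such a way that $\mathbf{c}$ is itself one of the basis vectors $\mathbf{b}_j$ with $j\le k_i$: indeed, if $\mathbf{c}=\mathbf{0}$ there is nothing to prove, and if $\mathbf{c}\neq\mathbf{0}$ then, letting $m$ be the least index with $\mathbf{c}\in C_m$ (so $m\le i$), we may extend $\{\mathbf{c}\}$ first to a basis of $C_m$, then successively to bases of $C_{m+1},\dots,C_{a}=\mathbb{F}_2^n$, yielding an adapted basis with $\mathbf{b}_1=\mathbf{c}$, and $j=1\le k_m\le k_i$. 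The Construction~D lattice $\Lambda_D$ built from this particular adapted basis contains the vector $2^i\psi(\mathbf{b}_1)=2^i\psi(\mathbf{c})$ (take $\alpha_1^{(i)}=1$ and all other coefficients zero). Since this $\Lambda_D\in\mathfrak{L}$, we get $2^i\psi(\mathbf{c})\in\bigoplus_{\Lambda_D\in\mathfrak{L}}\Lambda_D$, as desired. Summing over the finitely many generators then gives $\Lambda_{CF}\subseteq\bigoplus_{\Lambda_D\in\mathfrak{L}}\Lambda_D$, completing the proof.

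The only mildly delicate point — and the step I would write out most carefully — is the basis-extension argument: one must verify that starting from a single nonzero codeword one can always build a basis of $\mathbb{F}_2^n$ that simultaneously refines the whole flag $C_0\subseteq C_1\subseteq\cdots\subseteq C_a$ \emph{and} has the prescribed codeword as its first element. This is a routine consequence of the fact that any linearly independent set inside a subspace extends to a basis of that subspace, applied inductively up the chain, but it is the crux of why every generator $2^i\psi(\mathbf{c})$ is reachable by \emph{some} admissible Construction~D basis. Everything else is bookkeeping with the description \eqref{LatticeDbar} and the additive closure of lattices.
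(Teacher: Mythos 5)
Your proposal is correct and follows essentially the same route as the paper, which states this as an immediate consequence of the explicit description \eqref{LatticeDbar} of $\Lambda_{CF}$ together with Lemma~\ref{LemmaBasic}: every generator $2^i\psi(\mathbf{c})$ with $\mathbf{c}\in C_i$ lies in the Construction~D lattice built from an adapted basis containing $\mathbf{c}$. One small correction to your basis-extension step: when $\mathbf{c}\in C_m\setminus C_{m-1}$ with $m>0$ you cannot take $\mathbf{b}_1=\mathbf{c}$, since the first $k_0$ vectors of an adapted basis must span $C_0$; instead, build bases of $C_0,\dots,C_{m-1}$ first, adjoin $\mathbf{c}$ (independent of $C_{m-1}$ by minimality of $m$), and continue up the flag, which places $\mathbf{c}$ at position $j=k_{m-1}+1\leq k_i$ --- still sufficient for $2^i\psi(\mathbf{c})\in\Lambda_D$.
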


In the following theorem, we give a sufficient and necessary condition for
\[\Gamma_{CF}=\psi(C_0)+2\psi(C_1)+\ldots+2^{a-1}\psi(C_{a-1})+2^a\mathbb{Z}^n\]
from Construction by Code Formula to be a lattice. In addition, when the condition is met, the resulting lattice is the same as the lattice from Construction D.

\begin{theorem}\label{Theorem}
Given a family of nested binary linear codes $C_0\subseteq{}C_1\subseteq\ldots\subseteq{}C_{a-1}\subseteq{}C_a=\mathbb{F}_2^n$, the following statements are equivalent.
\begin{enumerate}
\item $\Gamma_{CF}$ is a lattice.
\item $\Gamma_{CF}=\Lambda_{CF}$.
\item $C_0\subseteq{}C_1\subseteq\ldots\subseteq{}C_{a-1}\subseteq{}C_a=\mathbb{F}_2^n$ is closed under Schur product.
\item $\Gamma_{CF}=\Lambda_D$.
\end{enumerate}
\end{theorem}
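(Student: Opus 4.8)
The plan is to prove the cyclic chain of implications $(3)\Rightarrow(4)\Rightarrow(2)\Rightarrow(1)\Rightarrow(3)$, since $(1)\Rightarrow(2)$ is immediate from the definition of $\Lambda_{CF}$ as the smallest lattice containing $\Gamma_{CF}$ (if $\Gamma_{CF}$ is already a lattice then it equals its own lattice closure), and $(2)\Rightarrow(1)$ is trivial. So the real content is $(3)\Rightarrow(4)$ and $(1)\Rightarrow(3)$, with $(4)\Rightarrow(2)$ as a short bridge.

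\textbf{Step $(3)\Rightarrow(4)$.} Assume the nested family is closed under Schur product. By Lemma~\ref{LemmaBasic} we already have $\Lambda_D\subseteq\Lambda_{CF}$, and since $\Gamma_{CF}\subseteq\Lambda_{CF}$ always, it suffices to show $\Gamma_{CF}\subseteq\Lambda_D$ and that $\Gamma_{CF}$ is closed under addition (so that $\Lambda_{CF}=\Gamma_{CF}\subseteq\Lambda_D$, forcing equality). The key identity is \eqref{EquationAst}: $\psi(\mathbf{x})+\psi(\mathbf{y})=\psi(\mathbf{x}+\mathbf{y})+2\psi(\mathbf{x}\ast\mathbf{y})$. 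I would use this to show that for each $i$, the set $2^i\psi(C_i)+2^{i+1}\psi(C_{i+1})+\cdots$ absorbs sums: adding two elements of $2^i\psi(C_i)$ produces $2^i\psi(\mathbf{x}+\mathbf{y}) + 2^{i+1}\psi(\mathbf{x}\ast\mathbf{y})$, and the Schur-closure hypothesis guarantees $\mathbf{x}\ast\mathbf{y}\in C_{i+1}$, so the ``carry'' term lands in the next level $2^{i+1}\psi(C_{i+1})$. Iterating this over all levels (and noting the top carry falls into $2^a\mathbb{Z}^n$) shows $\Gamma_{CF}$ is closed under addition, hence is a group, hence a lattice; and since it is spanned by elements $2^i\psi(\mathbf{c})$ with $\mathbf{c}\in C_i$, each of which lies in every Construction-D lattice, we get $\Gamma_{CF}\subseteq\Lambda_D$. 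Combined with Lemma~\ref{LemmaBasic} this gives $\Gamma_{CF}=\Lambda_D=\Lambda_{CF}$, establishing both $(4)$ and $(2)$ simultaneously.

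\textbf{Step $(1)\Rightarrow(3)$.} Conversely, suppose $\Gamma_{CF}$ is a lattice; I must deduce Schur closure. Fix $i$ and take $\mathbf{c}_1,\mathbf{c}_2\in C_i$. Then $2^i\psi(\mathbf{c}_1)$ and $2^i\psi(\mathbf{c}_2)$ both lie in $\Gamma_{CF}$, so their sum does too; by \eqref{EquationAst} this sum equals $2^i\psi(\mathbf{c}_1+\mathbf{c}_2)+2^{i+1}\psi(\mathbf{c}_1\ast\mathbf{c}_2)$. Since $\mathbf{c}_1+\mathbf{c}_2\in C_i$, the term $2^i\psi(\mathbf{c}_1+\mathbf{c}_2)\in\Gamma_{CF}$, and subtracting it (legitimate because $\Gamma_{CF}$ is a lattice, hence closed under negation and addition) leaves $2^{i+1}\psi(\mathbf{c}_1\ast\mathbf{c}_2)\in\Gamma_{CF}$. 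The remaining task — and this is the main obstacle — is to show that membership of $2^{i+1}\psi(\mathbf{c}_1\ast\mathbf{c}_2)$ in $\Gamma_{CF}$ forces $\mathbf{c}_1\ast\mathbf{c}_2\in C_{i+1}$. This requires a ``uniqueness of digits'' argument: one must analyze how an element of $\Gamma_{CF}$ whose entries are all in $\{0,2^{i+1}\}$ can be written as $\sum_{k=0}^{a-1}2^k\psi(\mathbf{d}_k)+2^a\mathbf{l}$, and argue by looking at the base-2 expansion of the coordinates that $\mathbf{d}_0,\dots,\mathbf{d}_i$ may be taken zero, so that the vector is $2^{i+1}$ times an element of $\psi(C_{i+1})+2\psi(C_{i+2})+\cdots$, whence reducing modulo $2^{i+2}$ pins down $\mathbf{c}_1\ast\mathbf{c}_2\in C_{i+1}$. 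Care is needed because the representation \eqref{LatticeDbar} is not unique (cf.\ the Remark), so I would phrase the argument in terms of reduction modulo appropriate powers of $2$ rather than manipulating a fixed representation: specifically, if $\mathbf{w}\in\Gamma_{CF}$ then $\mathbf{w}\bmod 2\in\psi(C_0)$, and inductively $(\mathbf{w}-\text{lifted lower digits})/2^j\bmod 2\in\psi(C_j)$, which both legitimizes the digit extraction and yields the desired conclusion. This inductive digit-extraction lemma is the technical heart of the theorem and is where I would spend the most effort making the argument airtight.
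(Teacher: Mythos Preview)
Your approach is essentially the paper's: the paper proves $4\Rightarrow1\Rightarrow2$ trivially, then $2\Rightarrow3$ by contrapositive and $3\Rightarrow4$ by induction, using exactly the carry identity \eqref{EquationAst} as you do. Two small recalibrations are worth making.

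First, in your $(3)\Rightarrow(4)$ step you assert that each $2^i\psi(\mathbf{c})$ with $\mathbf{c}\in C_i$ ``lies in every Construction-D lattice'' as if this were immediate. It is not: $\Lambda_D$ is defined via a \emph{fixed basis} $\mathbf{b}_1,\dots,\mathbf{b}_n$, and to get $2^i\psi(\mathbf{c})\in\Lambda_D$ for an arbitrary $\mathbf{c}\in C_i$ you must write $\mathbf{c}=\mathbf{b}_{j_1}+\cdots+\mathbf{b}_{j_s}$ over $\mathbb{F}_2$ and then use the very same carry argument \emph{inside} $\Lambda_D$ (Schur closure sends each carry into the next level, which is contained in $\Lambda_D$). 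This is precisely the inner induction the paper spells out; it is the actual content of $(3)\Rightarrow(4)$, not a side remark.

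Second, you overestimate the difficulty of $(1)\Rightarrow(3)$. The non-uniqueness Remark concerns representations in $\Lambda_{CF}$ (arbitrary $\{0,1\}$-sums over \emph{all} codewords at each level), not in $\Gamma_{CF}$. In $\Gamma_{CF}$ the representation $\mathbf{w}=\sum_{k}2^k\psi(\mathbf{d}_k)+2^a\mathbf{l}$ with $\psi(\mathbf{d}_k)\in\{0,1\}^n$ is \emph{unique}: it is the base-$2$ expansion of the coordinates of $\mathbf{w}\bmod 2^a$. So once you know $2^{i+1}\psi(\mathbf{c}_1\ast\mathbf{c}_2)\in\Gamma_{CF}$, reading off the $(i{+}1)$-st digit immediately gives $\mathbf{c}_1\ast\mathbf{c}_2\in C_{i+1}$; there is no delicate inductive extraction needed. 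The paper accordingly dispatches this direction in two lines.
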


\begin{proof}
It is easy to see that $\mathit{4}\Rightarrow\mathit{1}\Rightarrow\mathit{2}$. We have left to show that $\mathit{2}\Rightarrow\mathit{3}$ and $\mathit{3}\Rightarrow\mathit{4}$. \\

Suppose that $C_0\subseteq{}C_1\subseteq\ldots\subseteq{}C_{a-1}\subseteq{}C_a=\mathbb{F}_2^n$ is not closed under Schur product. That is, there exist $\mathbf{c}_1,\mathbf{c}_2\in{}C_i$ such that $\mathbf{c}_1\ast\mathbf{c}_2\notin{}C_{i+1}$ for some $i$. Therefore, $2^{i+1}\psi(\mathbf{c}_1\ast\mathbf{c}_2)\notin\Gamma_{CF}$. On the other hand, it follows from \eqref{EquationAst} that
\[2^{i+1}\psi(\mathbf{c}_1\ast\mathbf{c}_2)=2^i\psi(\mathbf{c}_1)+2^i\psi(\mathbf{c}_2)-2^i\psi(\mathbf{c}_1+\mathbf{c}_2)\in\Lambda_{CF}.\]
Thus, we have $\Gamma_{CF}\neq\Lambda_{CF}$. \\

Now, we will prove $\mathit{3}\Rightarrow\mathit{4}$ by induction. The case $a=1$ is trivial since both $\Gamma_{CF}$ and $\Lambda_D$ coincide with the lattice constructed from $C_0$ using Construction A. Let $C_0\subseteq{}C_1\subseteq\ldots\subseteq{}C_{a-1}\subseteq{}C_a=\mathbb{F}_2^n$ be a family of nested binary linear codes that is closed under Schur product, and let $\mathbf{b}_1,\mathbf{b}_2,\ldots,\mathbf{b}_n$ be a basis of $\mathbb{F}_2^n$ such that $\mathbf{b}_1,\ldots,\mathbf{b}_{k_i}$ span $C_i$. Applying induction hypothesis to $C_1\subseteq\ldots\subseteq{}C_{a-1}\subseteq{}C_a=\mathbb{F}_2^n$ yields
\begin{equation}\label{Lattice}
\Gamma'_{CF}=\Lambda'_D
\end{equation}
where
\[\Gamma'_{CF}=2\psi(C_{1})+\ldots+2^{a-1}\psi(C_{a-1})+2^a\mathbb{Z}^n\]
and
\[\Lambda'_D=\left\{\sum_{i=1}^{a-1}{{2^i}\sum_{j=1}^{k_i}{\alpha_j^{(i)}\psi(\mathbf{b}_j)}}+2^a\mathbf{l}\vbar\alpha_j^{(i)}\in\{0,1\}\mbox{ and }\mathbf{l}\in\mathbb{Z}^n\right\}.\]
To avoid confusion, we denote by $\Lambda$ the lattice given in (\ref{Lattice}). We now wish to show that
\[\Gamma_{CF}=\psi(C_0)+\Lambda=\{\psi(\mathbf{c})+\mathbf{a}\mid\mathbf{c}\in{}C_0\mbox{ and }\mathbf{a}\in\Lambda\}\]
is equal to
\[\Lambda_D=\left\{\sum_{j=1}^{k_0}{\alpha_j^{(0)}\psi(\mathbf{b}_j)}+\mathbf{a}\vbar\alpha_j^{(0)}\in\{0,1\}\mbox{ and }\mathbf{a}\in\Lambda\right\}.\]

To do so, we will prove by induction that if $\mathbf{c}\in{}C_0$ is a binary sum of $\mathbf{b}_{j_1},\ldots,\mathbf{b}_{j_s}$, $1\leq{}j_1,\ldots,j_s\leq{}k_0$, then
\[\psi(\mathbf{b}_{j_1})+\ldots+\psi(\mathbf{b}_{j_s})=\psi(\mathbf{c})+\mathbf{a}\]
for some $\mathbf{a}\in\Lambda$. The case $s=1$ is trivial since one may pick $\mathbf{a}=\mathbf{0}\in\Lambda$. Let $\mathbf{c}\in{}C_0$ be a binary sum of $\mathbf{b}_{j_1},\ldots,\mathbf{b}_{j_s}$, $1\leq{}j_1,\ldots,j_s\leq{}k_0$. By induction hypothesis, there exists $\mathbf{a'}\in\Lambda$ such that
\[\psi(\mathbf{b}_{j_1})+\ldots+\psi(\mathbf{b}_{j_{s-1}})=\psi(\mathbf{c}')+\mathbf{a}'\]
where $\mathbf{c}'\in{}C_0$ is a binary sum of $\mathbf{b}_{j_1},\ldots,\mathbf{b}_{j_{s-1}}$. Now,
\begin{align*}
\psi(\mathbf{b}_{j_1})+\ldots+\psi(\mathbf{b}_{j_{s-1}})+\psi(\mathbf{b}_{j_s}) & =\psi(\mathbf{c}')+\psi(\mathbf{b}_{j_s})+\mathbf{a}' \\
& =\psi(\mathbf{c}'+\mathbf{b}_{j_s})+2\psi(\mathbf{c}'\ast\mathbf{b}_{j_s})+\mathbf{a}' \\
& =\psi(\mathbf{c})+2\psi(\mathbf{c}'\ast\mathbf{b}_{j_s})+\mathbf{a}'.
\end{align*}
Since $C_0\subseteq{}C_1$ is closed under Schur product, $\mathbf{c}'\ast\mathbf{b}_{j_s}\in{}C_1$, and so $2\psi(\mathbf{c}'\ast\mathbf{b}_{j_s})\in\Lambda$. Letting $\mathbf{a}=2\psi(\mathbf{c}'\ast\mathbf{b}_{j_s})+\mathbf{a}'\in\Lambda$, we obtain
\[\psi(\mathbf{b}_{j_1})+\ldots+\psi(\mathbf{b}_{j_s})=\psi(\mathbf{c})+\mathbf{a}\]
as desired. We can now conclude that $\Gamma_{CF}=\Lambda_D$, and this finishes the proof of the theorem. \qed
\end{proof}

The above theorem explains why Construction D and Construction by Code Formula have both been successful in constructing and encoding Barnes-Wall lattices. This is due to the fact that a family of Reed-Muller codes is closed under Schur product. For the same reason, Construction D produces a unique lattice despite many choices for the basis of Reed-Muller codes. This is summarized in the corollaries below.

\begin{corollary}\label{CorollaryDsameDbar}
If a family of nested binary linear codes $C_0\subseteq{}C_1\subseteq\ldots\subseteq{}C_{a-1}\subseteq{}C_a=\mathbb{F}_2^n$ is closed under Schur product, then Construction D and the code formula yield the same lattice.
\end{corollary}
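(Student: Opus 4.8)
The plan is to derive this as an immediate consequence of Theorem~\ref{Theorem}. The hypothesis is precisely statement~\textit{3} of that theorem (the family is closed under Schur product), so the chain of equivalences established there gives us statement~\textit{4}, namely $\Gamma_{CF}=\Lambda_D$, and also statements~\textit{1} and~\textit{2}, namely that $\Gamma_{CF}$ is already a lattice and equals $\Lambda_{CF}$. First I would observe that the code formula construction literally outputs $\Gamma_{CF}$ (equivalently its lattice closure $\Lambda_{CF}$, which under the hypothesis coincides with $\Gamma_{CF}$), while Construction D outputs $\Lambda_D$; combining the two identities yields $\Lambda_D=\Gamma_{CF}=\Lambda_{CF}$, which is the asserted equality of the two lattices.

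The one point that deserves explicit mention is the dependence of Construction D on the chosen basis $\mathbf{b}_1,\ldots,\mathbf{b}_n$ with $\mathbf{b}_1,\ldots,\mathbf{b}_{k_i}$ spanning $C_i$. The proof of $\mathit{3}\Rightarrow\mathit{4}$ in Theorem~\ref{Theorem} takes such a basis as arbitrary input and concludes $\Gamma_{CF}=\Lambda_D$ for it; since $\Gamma_{CF}$ itself does not depend on any basis, every admissible choice of basis produces the same lattice $\Lambda_D$. I would therefore phrase the conclusion as: under Schur closure, Construction D yields a lattice independent of the basis choice, and that lattice coincides with the code-formula lattice $\Lambda_{CF}=\Gamma_{CF}$. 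This also makes the basis-independence remark preceding the corollary precise and recovers Corollary~\ref{CorollaryDinDbar} in the special case where the direct sum collapses to a single summand.

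There is essentially no difficult step here, since all the real content sits in Theorem~\ref{Theorem}; the only thing to be careful about is bookkeeping, i.e.\ being explicit about which objects are being compared (the set $\Gamma_{CF}$ that the code formula produces, its lattice closure $\Lambda_{CF}$, and $\Lambda_D$) and recording that all three agree once the Schur-closure hypothesis is in force. Accordingly, I would keep the written proof to two or three lines that simply invoke Theorem~\ref{Theorem} and chain together statements~\textit{2} and~\textit{4}.
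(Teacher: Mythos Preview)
Your proposal is correct and matches the paper's approach exactly: the corollary is stated without proof in the paper, as it is an immediate consequence of the equivalence $\mathit{3}\Leftrightarrow\mathit{4}$ (together with $\mathit{2}$) in Theorem~\ref{Theorem}. Your additional observation about basis-independence is also right, and in fact the paper isolates that point as a separate corollary immediately following this one.
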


\begin{corollary}
If a family of nested binary linear codes $C_0\subseteq{}C_1\subseteq\ldots\subseteq{}C_{a-1}\subseteq{}C_a=\mathbb{F}_2^n$ is closed under Schur product, then the resulting lattices from Construction D are the same irrespective of the chosen basis $\mathbf{b}_1,\mathbf{b}_2,\ldots,\mathbf{b}_n$ of $\mathbb{F}_2^n$ such that $\mathbf{b}_1,\ldots,\mathbf{b}_{k_i}$ span $C_i$.
\end{corollary}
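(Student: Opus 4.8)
The plan is to deduce this directly from Theorem \ref{Theorem} (equivalently, from Corollary \ref{CorollaryDsameDbar}). The key observation is that the set $\Gamma_{CF}=\psi(C_0)+2\psi(C_1)+\ldots+2^{a-1}\psi(C_{a-1})+2^a\mathbb{Z}^n$ is defined purely in terms of the nested codes and the embedding $\psi$; it makes no reference to any basis of $\mathbb{F}_2^n$. Since the family is closed under Schur product, item 3 of Theorem \ref{Theorem} holds, and hence so does item 4: for every admissible basis $\mathbf{b}_1,\ldots,\mathbf{b}_n$ (that is, one for which $\mathbf{b}_1,\ldots,\mathbf{b}_{k_i}$ span $C_i$ for each $i$), the Construction D lattice $\Lambda_D$ built from that basis satisfies $\Lambda_D=\Gamma_{CF}$. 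As the right-hand side does not depend on the basis, any two lattices obtained from Construction D with admissible bases coincide, which is the claim.

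If one prefers a self-contained argument not routed through Theorem \ref{Theorem}, I would instead fix two admissible bases and show directly that each generator $2^i\psi(\mathbf{b}_j)$ of the first lattice lies in the second: write $\mathbf{b}_j$ as a binary sum of basis vectors of the second basis that also span $C_i$, then expand $\psi$ of that sum using \eqref{EquationAst} repeatedly, each step producing a correction term $2\psi(\mathbf{c}'\ast\mathbf{b})$ that lands in $2\psi(C_{i+1})\subseteq\Lambda_D$ by Schur-product closure --- exactly the induction already carried out in the proof of $\mathit{3}\Rightarrow\mathit{4}$. Either way, there is essentially no obstacle here; the only point requiring a moment's care is checking that the admissibility hypothesis on the basis is precisely the hypothesis under which $\Gamma_{CF}=\Lambda_D$ was established, which it is. For this reason I would present the corollary with a one-line proof invoking Theorem \ref{Theorem}.
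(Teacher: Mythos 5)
Your proposal is correct and follows exactly the route the paper intends: the corollary is stated without proof as an immediate consequence of Theorem \ref{Theorem}, since the implication $\mathit{3}\Rightarrow\mathit{4}$ there is proved for an arbitrary admissible basis, giving $\Lambda_D=\Gamma_{CF}$ with a basis-independent right-hand side. Your remark that one must check the admissibility hypothesis matches the one used in the theorem is the right (and only) point of care.
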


We also would like to note that the property of being closed under Schur product generalizes Forney's concept of ``carries'' \cite[page 1133]{F1}. It was shown that if a lattice is decomposable as $\Lambda=\psi(C_0)+2\psi(C_1)+4\mathbb{Z}^n$, then the carries of the sum (i.e., Schur product) of any codewords of $C_0$ is in $C_1$.

If a chain of codes $C_0\subseteq{}C_1\subseteq\ldots\subseteq{}C_{a-1}\subseteq{}C_a=\mathbb{F}_2^n$ is not closed under Schur product, then we know from Theorem \ref{Theorem} that $\Gamma_{CF}=\psi(C_0)+2\psi(C_1)+\ldots+2^{a-1}\psi(C_{a-1})+2^a\mathbb{Z}^n$ is not a lattice. However, one may easily construct another chain of codes $C'_0\subseteq{}C'_1\subseteq\ldots\subseteq{}C'_{a-1}\subseteq{}C'_a=\mathbb{F}_2^n$ which is closed under Schur product and $C_i\subseteq{}C'_i$ for all $i$. It follows that $\Gamma'_{CF}=\psi(C'_0)+2\psi(C'_1)+\ldots+2^{a-1}\psi(C'_{a-1})+2^a\mathbb{Z}^n$ contains $\Gamma_{CF}$ and is itself a lattice. Since $\Lambda_{CF}$ is defined as the smallest lattice that contains $\Gamma_{CF}$, we have $\Lambda_{CF}\subseteq\Gamma'_{CF}$. The next example will demonstrate that $\Lambda_{CF}$ can be made a strict subset of $\Gamma'_{CF}$. In other words, $\Gamma'_{CF}$ may not be the smallest lattice that contains $\Gamma_{CF}$ despite the minimality of $C'_i$.

\begin{example}
Let $C$ be the shortened first order Reed-Muller code of length 15; that is, $C$ has a parity check matrix
\[\left(
\begin{array}{cccccccccccccccccc}
0 & 0 & 0 & & 0 & 0 & 0 & 0 & & 1 & 1 & 1 & 1 & & 1 & 1 & 1 & 1 \\
0 & 0 & 0 & & 1 & 1 & 1 & 1 & & 0 & 0 & 0 & 0 & & 1 & 1 & 1 & 1 \\
0 & 1 & 1 & & 0 & 0 & 1 & 1 & & 0 & 0 & 1 & 1 & & 0 & 0 & 1 & 1 \\
1 & 0 & 1 & & 0 & 1 & 0 & 1 & & 0 & 1 & 0 & 1 & & 0 & 1 & 0 & 1
\end{array}\right)\]
and is known also as the simplex code, the dual of the Hamming code of length 15. The chain of codes $C\subseteq{}C\subseteq{}C\subseteq\mathbb{F}_2^{15}$ is not closed under Schur product. So, $\Gamma_{CF}=\psi(C)+2\psi(C)+4\psi(C)+8\mathbb{Z}^{15}$ is not a lattice and is a strict subset of the lattice $\Lambda_{CF}$. We note here that the sum of the entries of an element in this lattice is divisible by 8.

Now, we denote by $C_1$ the smallest code that contains the Schur product of elements in $C$ and $C_2$ the smallest code that contains the Schur product of elements in $C_2$. It follows that $C_1$ and $C_2$ are the shortened second and third order Reed-Muller code and have a parity check matrix
\[\left(
\begin{array}{cccccccccccccccccc}
0 & 0 & 0 & & 0 & 0 & 0 & 0 & & 1 & 1 & 1 & 1 & & 1 & 1 & 1 & 1 \\
0 & 0 & 0 & & 1 & 1 & 1 & 1 & & 0 & 0 & 0 & 0 & & 1 & 1 & 1 & 1 \\
0 & 1 & 1 & & 0 & 0 & 1 & 1 & & 0 & 0 & 1 & 1 & & 0 & 0 & 1 & 1 \\
1 & 0 & 1 & & 0 & 1 & 0 & 1 & & 0 & 1 & 0 & 1 & & 0 & 1 & 0 & 1 \\[3pt]
0 & 0 & 0 & & 0 & 0 & 0 & 0 & & 0 & 0 & 0 & 0 & & 1 & 1 & 1 & 1 \\
0 & 0 & 0 & & 0 & 0 & 0 & 0 & & 0 & 0 & 1 & 1 & & 0 & 0 & 1 & 1 \\
0 & 0 & 0 & & 0 & 0 & 0 & 0 & & 0 & 1 & 0 & 1 & & 0 & 1 & 0 & 1 \\
0 & 0 & 0 & & 0 & 0 & 1 & 1 & & 0 & 0 & 0 & 0 & & 0 & 0 & 1 & 1 \\
0 & 0 & 0 & & 0 & 1 & 0 & 1 & & 0 & 0 & 0 & 0 & & 0 & 1 & 0 & 1 \\
0 & 0 & 1 & & 0 & 0 & 0 & 1 & & 0 & 0 & 0 & 1 & & 0 & 0 & 0 & 1
\end{array}\right)\mbox{ and }\left(
\begin{array}{cccccccccccccccccc}
0 & 0 & 0 & & 0 & 0 & 0 & 0 & & 1 & 1 & 1 & 1 & & 1 & 1 & 1 & 1 \\
0 & 0 & 0 & & 1 & 1 & 1 & 1 & & 0 & 0 & 0 & 0 & & 1 & 1 & 1 & 1 \\
0 & 1 & 1 & & 0 & 0 & 1 & 1 & & 0 & 0 & 1 & 1 & & 0 & 0 & 1 & 1 \\
1 & 0 & 1 & & 0 & 1 & 0 & 1 & & 0 & 1 & 0 & 1 & & 0 & 1 & 0 & 1 \\[3pt]
0 & 0 & 0 & & 0 & 0 & 0 & 0 & & 0 & 0 & 0 & 0 & & 1 & 1 & 1 & 1 \\
0 & 0 & 0 & & 0 & 0 & 0 & 0 & & 0 & 0 & 1 & 1 & & 0 & 0 & 1 & 1 \\
0 & 0 & 0 & & 0 & 0 & 0 & 0 & & 0 & 1 & 0 & 1 & & 0 & 1 & 0 & 1 \\
0 & 0 & 0 & & 0 & 0 & 1 & 1 & & 0 & 0 & 0 & 0 & & 0 & 0 & 1 & 1 \\
0 & 0 & 0 & & 0 & 1 & 0 & 1 & & 0 & 0 & 0 & 0 & & 0 & 1 & 0 & 1 \\
0 & 0 & 1 & & 0 & 0 & 0 & 1 & & 0 & 0 & 0 & 1 & & 0 & 0 & 0 & 1 \\[3pt]
0 & 0 & 0 & & 0 & 0 & 0 & 0 & & 0 & 0 & 0 & 0 & & 0 & 0 & 1 & 1 \\
0 & 0 & 0 & & 0 & 0 & 0 & 0 & & 0 & 0 & 0 & 0 & & 0 & 1 & 0 & 1 \\
0 & 0 & 0 & & 0 & 0 & 0 & 0 & & 0 & 0 & 0 & 1 & & 0 & 0 & 0 & 1 \\
0 & 0 & 0 & & 0 & 0 & 0 & 1 & & 0 & 0 & 0 & 0 & & 0 & 0 & 0 & 1
\end{array}\right)\]
respectively. We now have $C\subseteq{}C_1\subseteq{}C_2\subseteq\mathbb{F}_2^{15}$ closed under Schur product, and it follows that $\Gamma'_{CF}=\psi(C)+2\psi(C_1)+4\psi(C_2)+8\mathbb{Z}^{15}$ is a lattice. Nonetheless, $(0,0,2,0,0,0,2,0,0,0,2,2,2,2,0)\in2\psi(C_1)$ is not in $\Lambda_{CF}$ since the sum of its entries is 12. We conclude that $\Lambda_{CF}\subsetneqq\Gamma'_{CF}$.
\end{example}

\remove{We see from Corollary \ref{CorollaryDinDbar} that, in general, Construction $\overline{\mbox{D}}$ produces a finer lattice than Construction D, and from Corollary \ref{CorollaryDsameDbar} that the two lattices are the same if the nested codes being used are closed under Schur product. Nonetheless, the following proposition will demonstrate that it is always possible to construct a ``closure'' of $C_0\subseteq{}C_1\subseteq\ldots\subseteq{}C_{a-1}\subseteq{}C_a=\mathbb{F}_2^n$ such that the two constructions coincide.

\begin{proposition}
Any lattice constructible using Construction $\overline{\mbox{D}}$ is also constructible using Construction D.
\end{proposition}

\begin{proof}
Let $\Gamma_{CF}$ be a lattice constructed from Construction $\overline{\mbox{D}}$ using a family of nested binary linear codes $C_0\subseteq{}C_1\subseteq\ldots\subseteq{}C_{a-1}\subseteq{}C_a=\mathbb{F}_2^n$ (which is not necessarily closed under Schur product). Define another family of nested codes successively as follows: $C'_0=C_0$, and $C'_i$ is the smallest code that contain the Schur product code of $C'_{i-1}$ and $C_i$ where $i\in\{1,2,\ldots,a\}$. Let $\Gamma'_D$ and $\Gamma'_{CF}$ be the lattices constructed from Construction D and Construction $\overline{\mbox{D}}$ respectively using the chain of codes $C'_0\subseteq{}C'_1\subseteq\ldots\subseteq{}C'_{a-1}\subseteq{}C'_a=\mathbb{F}_2^n$. We will prove that $\Gamma_{CF}=\Gamma'_{CF}=\Gamma'_D$.

Since the chain of codes $C'_0\subseteq{}C'_1\subseteq\ldots\subseteq{}C'_{a-1}\subseteq{}C'_a=\mathbb{F}_2^n$ is closed under Schur product, we have that $\Gamma'_{CF}=\Gamma'_D$. Also, as $C_i\subseteq{}C'_i$ for all $i$, it is clear that $\Gamma_{CF}\subseteq\Gamma'_{CF}$. We are now left to show that $\Gamma'_{CF}\subseteq\Gamma_{CF}$. \qed
\end{proof}}

%
%
%

\section{Construction A$'$}

In this section, we will consider both real and complex lattices, where {a complex lattice over $\mathbb{Z}[i]$ is a discrete subgroup of $\mathbb{C}^n$}. Denote by $\mathcal{R}$ either $\mathbb{Z}$ or $\mathbb{Z}[i]$, and let $v=2$ if $\mathcal{R}=\mathbb{Z}$ and $v=1+i$ if $\mathcal{R}=\mathbb{Z}[i]$. In other words, the constant $v$ takes on different values depending on the choice of $\mathcal{R}$. Again, let $\psi$ be the natural embedding of $\mathbb{F}_2^n$ into $\mathcal{R}^n$. We give a broader definition for Construction by Code Formula to include complex lattices below. The complex construction corresponds to the complex code formula given in \cite{F1}

\begin{definition}[Construction by Code Formula]
A lattice $\Lambda_{CF}$ over $\mathcal{R}$ is obtained from Construction by Code Formula using a family of nested binary linear codes $C_0\subseteq{}C_1\subseteq\ldots\subseteq{}C_{a-1}\subseteq{}C_a=\mathbb{F}_2^n$ if $\Lambda_{CF}$ is the smallest lattice that contains
\[\Gamma_{CF}=\psi(C_0)+v\psi(C_{1})+\ldots+v^{a-1}\psi(C_{a-1})+v^a\mathcal{R}^n.\]
\end{definition}

Define the polynomial quotient ring $\mathcal{U}_a:=\mathbb{F}_2[u]/u^a$ where $u$ is a variable. A linear code over $\mathcal{U}_a$ is a submodule of $\mathcal{U}_a^n$. The code $C$ corresponding to a generator matrix $G\in\mathcal{U}_a^{k\times{}n}$ is given by
\[C=\{\mathbf{u}G\mid\mathbf{u}\in\mathcal{U}_a^k\}\]
where matrix multiplication is over the ring $\mathcal{U}_a$ and $k$ is the rank of the code. One may embed $\mathcal{U}_a$ into $\mathcal{R}$ via the mapping $\Phi:\mathcal{U}_a\rightarrow\mathcal{R}$ given by
\[\Phi\left(\sum_{j=0}^{a-1}{b_ju^j}\right)=\sum_{j=0}^{a-1}{\psi(b_j)v^j}.\]
We will also use $\Phi$ as a bit-wise embedding from $\mathcal{U}_a^n$ into $\mathcal{R}^n$. The following construction is due to \cite{HVB_ISIT,HVB}.

\begin{definition}[Construction A$'$]
A lattice $\Lambda_{A'}$ over $\mathcal{R}$ is obtained from Construction A$'$ using a linear code $C$ over $\mathcal{U}_a$ if $\Lambda_{A'}$ is the smallest lattice that contains
\[\Gamma_{A'}=\Phi(C)+v^a\mathcal{R}^n.\]
\end{definition}

The next proposition shows that there exists a correspondence from a chain of binary linear codes $C_0\subseteq{}C_1\subseteq\ldots\subseteq{}C_{a-1}\subseteq{}C_a=\mathbb{F}_2^n$ to a linear code over $\mathcal{U}_a$ such that $\Gamma_{CF}$ and $\Gamma_{A'}$ from Construction by Code Formula and Construction A$'${, respectively,} coincide. This will prove that $\Gamma_{A'}$ is generally not a lattice, and any lattice constructible using Construction by Code Formula is also constructible using Construction A$'$.

\begin{proposition}\label{PropDbarA'}
Let $C_0\subseteq{}C_1\subseteq\ldots\subseteq{}C_{a-1}\subseteq{}C_a=\mathbb{F}_2^n$ be a family of nested binary linear codes and
\[\Gamma_{CF}=\psi(C_0)+v\psi(C_{1})+\ldots+v^{a-1}\psi(C_{a-1})+v^a\mathcal{R}^n.\]
There exists a linear code $C$ over $\mathcal{U}_a$ such that $\Gamma_{CF}=\Gamma_{A'}$ where
\[\Gamma_{A'}=\Phi(C)+v^a\mathcal{R}^n.\]
\end{proposition}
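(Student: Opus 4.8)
The plan is to exhibit the code $C$ explicitly and then check two things: that $C$ is genuinely a linear code over $\mathcal{U}_a$ (a $\mathcal{U}_a$-submodule of $\mathcal{U}_a^n$), and that $\Phi(C)$ equals the Minkowski sum $\psi(C_0)+v\psi(C_1)+\ldots+v^{a-1}\psi(C_{a-1})$. The natural choice is $C=\{\mathbf{c}_0+\mathbf{c}_1u+\ldots+\mathbf{c}_{a-1}u^{a-1}\in\mathcal{U}_a^n:\mathbf{c}_i\in C_i\text{ for }0\le i\le a-1\}$, that is, the set of polynomial vectors whose coefficient of $u^i$ is required to lie in $C_i$; equivalently $C=\sum_{i=0}^{a-1}u^iC_i$, where each $C_i$ is regarded as a set of constant vectors in $\mathcal{U}_a^n$.

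The main verification is that $C$ is closed under multiplication by $\mathcal{U}_a$. Closure under addition is immediate since each $C_i$ is $\mathbb{F}_2$-linear, and since $\mathcal{U}_a$ is generated as a ring over $\mathbb{F}_2$ by $u$, it suffices to check multiplication by $u$. Here $u(\mathbf{c}_0+\ldots+\mathbf{c}_{a-1}u^{a-1})=\mathbf{c}_0u+\ldots+\mathbf{c}_{a-2}u^{a-1}$ because $u^a=0$ in $\mathcal{U}_a$, and the coefficient of $u^i$ in this product is $\mathbf{c}_{i-1}\in C_{i-1}\subseteq C_i$. This is the one place the nesting hypothesis $C_0\subseteq C_1\subseteq\ldots$ enters, and it is exactly what forces $C$ to be a $\mathcal{U}_a$-module rather than merely an $\mathbb{F}_2$-subspace; I regard this as the conceptual heart of the correspondence rather than an obstacle. (If a generator matrix is wanted, take a basis $\mathbf{b}_1,\ldots,\mathbf{b}_n$ of $\mathbb{F}_2^n$ with $\mathbf{b}_1,\ldots,\mathbf{b}_{k_i}$ spanning $C_i$; then the vectors $u^i\mathbf{b}_j\in\mathcal{U}_a^n$, $0\le i\le a-1$, $1\le j\le k_i$, with $\mathbf{b}_j$ read as a constant vector, generate $C$ over $\mathcal{U}_a$.)

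Finally I would unwind the definition of $\Phi$: for $\mathbf{c}(u)=\sum_{i=0}^{a-1}\mathbf{c}_iu^i\in C$ written with $\mathbf{c}_i\in\mathbb{F}_2^n$, the defining bit-wise formula for $\Phi$ gives $\Phi(\mathbf{c}(u))=\sum_{i=0}^{a-1}v^i\psi(\mathbf{c}_i)$ at once. Letting each $\mathbf{c}_i$ range independently over $C_i$ shows $\Phi(C)=\psi(C_0)+v\psi(C_1)+\ldots+v^{a-1}\psi(C_{a-1})$ as subsets of $\mathcal{R}^n$, whence $\Gamma_{A'}=\Phi(C)+v^a\mathcal{R}^n=\Gamma_{CF}$. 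The only subtlety to watch is that one must \emph{not} invoke additivity of $\Phi$ — because of carries, $\Phi$ is not additive — but only evaluate it on the unique degree-$<a$ representative of each codeword. With the proposition in hand, the two claims preceding it follow immediately: $\Gamma_{A'}$ inherits from Example~\ref{Example} that it need not be a lattice, and since $\Lambda_{CF}$ and $\Lambda_{A'}$ are by definition the smallest lattices containing $\Gamma_{CF}=\Gamma_{A'}$, they coincide, so every lattice obtained from Construction by Code Formula is also obtained from Construction A$'$.
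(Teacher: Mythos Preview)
Your proof is correct and constructs the same code $C$ as the paper, but your verification is more streamlined. The paper specifies $C$ via an explicit generator matrix $\widetilde{G}$ obtained by stacking $G_0, uG_1,\ldots,u^{a-1}G_{a-1}$ (where $G_0,\ldots,G_i$ together generate $C_i$) and then proves the two inclusions $\Gamma_{CF}\subseteq\Gamma_{A'}$ and $\Gamma_{A'}\subseteq\Gamma_{CF}$ by explicit matrix manipulations; in effect those computations amount to showing that the $\mathcal{U}_a$-code generated by $\widetilde{G}$ is precisely your set $\{\sum_{i}\mathbf{c}_iu^i:\mathbf{c}_i\in C_i\}$. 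By defining $C$ as that set from the outset and checking the $\mathcal{U}_a$-module property directly---closure under multiplication by $u$ being exactly where the nesting $C_{i-1}\subseteq C_i$ enters---you bypass the generator-matrix bookkeeping and make transparent that $\Phi(C)=\sum_{i}v^i\psi(C_i)$ follows immediately from the definition of $\Phi$. Your caution about the non-additivity of $\Phi$ is well placed, though unneeded here since you only evaluate $\Phi$ on single elements.
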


\begin{proof}
Let $k_i=\dim(C_i)$ and $k=k_{a-1}=\dim(C_{a-1})$. Let $G\in\mathbb{F}_2^{k\times{}n}$ be a generator matrix for $C_{a-1}$ such that the first $k_i$ rows of $G$ generate $C_i$. That is, one may write $G$ as
\[G=\left[
\begin{array}{c}
G_0 \\
G_{1} \\
\vdots \\
G_{a-1}
\end{array}
\right]\in\mathbb{F}_2^{k\times{}n}\]
where the $k_i\times{}n$ matrix $\left[
\begin{array}{c}
G_0 \\
G_1 \\
\vdots \\
G_i
\end{array}
\right]$ is a generator matrix for $C_i$. \\

Let $C$ be the code over $\mathcal{U}_a$ generated by a generator matrix
\[\widetilde{G}=\left[
\begin{array}{c}
G_0 \\
uG_1 \\
\vdots \\
u^{a-1}G_{a-1}
\end{array}
\right]\in\mathcal{U}_a^{k\times{}n},\]
and let
\[\Gamma_{A'}=\Phi(C)+v^a\mathcal{R}^n.\]
We will prove that
$\Gamma_{CF}=\Gamma_{A'}$. \\

Fix an element $\mathbf{x}\in\Gamma_{CF}$. By construction, one may express $\mathbf{x}$ as
\[\mathbf{x}=\psi(\mathbf{c}_0)+v\psi(\mathbf{c}_1)+\ldots+v^{a-1}\psi(\mathbf{c}_{a-1})+v^{a}\mathbf{l}\]
where $\mathbf{c}_i\in{}C_i$ and $\mathbf{l}\in\mathcal{R}^n$. Let $\mathbf{d}_i=(d_{i,1},\ldots,d_{i,k_i})\in\mathbb{Z}_2^{1\times{}k_i}$ be a vector such that
\[\mathbf{d}_i\left[
\begin{array}{c}
G_0 \\
G_1 \\
\vdots \\
G_i
\end{array}
\right]=(d_{i,1},\ldots,d_{i,k_i})\left[
\begin{array}{c}
G_0 \\
G_1 \\
\vdots \\
G_i
\end{array}
\right]=\mathbf{c}_i.\]

Now, one may multiply the entries of $\mathbf{d}_i$ by powers of $u$ and append 0 as necessary to obtain $\widetilde{\mathbf{d}}_i=(d_{i,1}u^i,\ldots,d_{i,k_i},0,\ldots,0)\in\mathcal{U}_a^{1\times{}k}$ such that
\[\widetilde{\mathbf{d}}_i\widetilde{G}=(d_{i,1}u^i,\ldots,d_{i,k_i},0,\ldots,0)\left[
\begin{array}{c}
G_0 \\
uG_1 \\
\vdots \\
u^iG_i \\
\vdots \\
u^{a-1}G_{a-1}
\end{array}
\right]=\mathbf{c}_iu^i.\]
Thus, we have
\[(\widetilde{\mathbf{d}}_0+\widetilde{\mathbf{d}}_1+\ldots+\widetilde{\mathbf{d}}_{a-1})\widetilde{G}=\mathbf{c}_0+\mathbf{c}_1u+\ldots+\mathbf{c}_{a-1}u^{a-1}.\]
It follows that
\[\Phi\left((\widetilde{\mathbf{d}}_0+\widetilde{\mathbf{d}}_1+\ldots+\widetilde{\mathbf{d}}_{a-1})\widetilde{G}\right)=\psi(\mathbf{c}_0)+\psi(\mathbf{c}_1)v+\ldots+\psi(\mathbf{c}_{a-1})v^{a-1},\]
and so $\mathbf{x}\in\Gamma_{A'}$. We may now conclude that $\Gamma_{CF}\subseteq\Gamma_{A'}$. \\

On the other hand, for any $\widetilde{\mathbf{d}}\in\mathcal{U}_a^{1\times{}k}$, the coefficient of $u^i$ in
\[\widetilde{\mathbf{d}}\widetilde{G}=\widetilde{\mathbf{d}}\left[
\begin{array}{c}
G_0 \\
uG_1 \\
\vdots \\
u^iG_i \\
\vdots \\
u^{a-1}G_{a-1}
\end{array}
\right]\]
must be a linear combination of the rows of $\left[
\begin{array}{c}
G_0 \\
G_1 \\
\vdots \\
G_{i}
\end{array}
\right]$. Therefore,
\[\widetilde{\mathbf{d}}\widetilde{G}=\mathbf{c}_0+\mathbf{c}_1u+\ldots+\mathbf{c}_iu^i+\ldots+\mathbf{c}_{a-1}u^{a-1}\]
for some $\mathbf{c}_0\in{}C_0,\ldots,\mathbf{c}_{a-1}\in{}C_{a-1}$. It follows that $\Gamma_{A'}\subseteq\Gamma_{CF}$, and this finishes the proof of the proposition. \qed
\end{proof}

\begin{corollary}\label{Corollary}
If a code $C$ over $\mathcal{U}_a$ can be expressed as $C_0+uC_1+\ldots+u^{a-1}C_{a-1}$ where $C_0\subseteq{}C_1\subseteq\ldots\subseteq{}C_{a-1}$ is closed under Schur product, then $\Gamma_{A'}=\Phi(C)+2^a\mathbb{Z}^n$ from Construction A$'$ for the reals is a lattice.
\end{corollary}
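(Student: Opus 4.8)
The plan is to reduce the claim to Theorem~\ref{Theorem} by identifying $\Gamma_{A'}$, for the reals, with a code-formula set $\Gamma_{CF}$. First I would enlarge the given chain $C_0\subseteq C_1\subseteq\ldots\subseteq C_{a-1}$ by adjoining $C_a=\mathbb{F}_2^n$. Since the Schur product of any two codewords of $C_{a-1}$ trivially lies in $\mathbb{F}_2^n=C_a$, the enlarged chain $C_0\subseteq C_1\subseteq\ldots\subseteq C_{a-1}\subseteq C_a=\mathbb{F}_2^n$ is still closed under Schur product, and it now has exactly the form required by Theorem~\ref{Theorem}.

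Next I would compute $\Phi(C)$ explicitly. A generic element of $C=C_0+uC_1+\ldots+u^{a-1}C_{a-1}$ is $\mathbf{c}_0+u\mathbf{c}_1+\ldots+u^{a-1}\mathbf{c}_{a-1}$ with $\mathbf{c}_i\in C_i$; because each $\mathbf{c}_i$ has entries in $\{0,1\}$, this is precisely the $u$-adic expansion of the element componentwise, so by the definition of $\Phi$ (here $\mathcal{R}=\mathbb{Z}$, $v=2$) its image is $\psi(\mathbf{c}_0)+2\psi(\mathbf{c}_1)+\ldots+2^{a-1}\psi(\mathbf{c}_{a-1})$. Letting the $\mathbf{c}_i$ range independently over the $C_i$ yields
\[\Phi(C)=\psi(C_0)+2\psi(C_1)+\ldots+2^{a-1}\psi(C_{a-1}),\]
hence
\[\Gamma_{A'}=\Phi(C)+2^a\mathbb{Z}^n=\psi(C_0)+2\psi(C_1)+\ldots+2^{a-1}\psi(C_{a-1})+2^a\mathbb{Z}^n=\Gamma_{CF}\]
for the chain fixed above. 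Alternatively, one may simply observe that $C$ is the code over $\mathcal{U}_a$ produced by the construction in the proof of Proposition~\ref{PropDbarA'} applied to this chain, so the equality $\Gamma_{CF}=\Gamma_{A'}$ is already available from that proposition.

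Finally, since the enlarged chain is closed under Schur product, Theorem~\ref{Theorem} (the implication \textit{3}~$\Rightarrow$~\textit{1}) shows that $\Gamma_{CF}$ is a lattice; combined with $\Gamma_{A'}=\Gamma_{CF}$, this proves that $\Gamma_{A'}$ is a lattice. I do not expect a genuine obstacle: the only points needing care are the bookkeeping in the identification $\Phi(C)=\sum_{i=0}^{a-1}2^i\psi(C_i)$ --- in particular that the summands of $C=\sum_i u^iC_i$ are coefficientwise in $u$, so that no carries arise under $\Phi$ before reduction modulo $2^a$ --- and the (trivial) check that adjoining $C_a=\mathbb{F}_2^n$ preserves closure under Schur product so that Theorem~\ref{Theorem} can be invoked.
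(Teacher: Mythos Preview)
Your argument is correct and follows exactly the approach the paper intends: the corollary is stated immediately after Proposition~\ref{PropDbarA'} precisely because the identification $\Gamma_{A'}=\Gamma_{CF}$ established there, together with the implication \textit{3}~$\Rightarrow$~\textit{1} of Theorem~\ref{Theorem}, yields the result at once. Your care in adjoining $C_a=\mathbb{F}_2^n$ to make the chain match the hypotheses of Theorem~\ref{Theorem}, and in noting that the $u$-adic expansion is coefficientwise so that $\Phi(C)=\sum_{i=0}^{a-1}2^i\psi(C_i)$, is exactly the bookkeeping needed.
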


Note that the previous corollary provides us a glimpse of the necessary condition for Construction A$'$ to produce a lattice. To describe the precise condition, we now focus on Construction A$'$ for the reals and generalize Schur product to polynomials and elements in $\mathcal{U}_a^n$. The details are given as follows. \\

Let $x=x_0+x_1u+\ldots+x_{a-1}u^{a-1}$ and $y=y_0+y_1u+\ldots+y_{a-1}u^{a-1}$ be elements in $\mathcal{U}_a=\mathbb{F}_2[u]/u^a$. The entrywise multiplication of $x$ and $y$ (known also as Schur product or Hadamard product of polynomials) is given by
\[x\ast{}y=x_0y_0+x_1y_1u+x_{a-1}y_{a-1}u^{a-1}\in\mathcal{U}_a.\]
Now, for $\mathbf{w}=(w_1,\ldots,w_n),\mathbf{z}=(z_1,\ldots,z_n)\in\mathcal{U}_a^n$, define
\[\mathbf{w}\ast\mathbf{z}=(w_1\ast{}z_1,\ldots,w_n\ast{}z_n).\]

Alternatively, one may write $\mathbf{w},\mathbf{z}\in\mathcal{U}_a^n$ as $\mathbf{w}=\mathbf{w}_0+\mathbf{w}_1u+\ldots+\mathbf{w}_{a-1}u^{a-1}$ and $\mathbf{z}=\mathbf{z}_0+\mathbf{z}_1u+\ldots+\mathbf{z}_{a-1}u^{a-1}$ where $\mathbf{w}_i,\mathbf{z}_i\in\mathbb{F}_2^n$ for $i=0,\ldots,a-1$. It follows that
\[\mathbf{w}\ast\mathbf{z}=(\mathbf{w}_0\ast\mathbf{z}_0)+(\mathbf{w}_1\ast\mathbf{z}_1)u+\ldots+(\mathbf{w}_{a-1}\ast\mathbf{z}_{a-1})u^{a-1}.\]

We say that a code $C$ over $\mathcal{U}_a$ is closed under \textit{shifted} Schur product if and only if, for any codewords $\mathbf{c}_1$ and $\mathbf{c}_2$ of $C$, $(\mathbf{c}_1\ast\mathbf{c}_2)u$ is a codeword of $C$. We now state the necessary and sufficient condition for Construction A$'$ to give a lattice in the following theorem.

\begin{theorem}
Let $C$ be a code over $\mathcal{U}_a$. The set $\Gamma_{A'}=\Phi(C)+2^a\mathbb{Z}^n$ from Construction A$'$ is a lattice if and only if $C$ is closed under shifted Schur product.
\end{theorem}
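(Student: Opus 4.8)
The plan is to prove both directions by characterizing when the additive closure of $\Gamma_{A'}$ forces no new points. First I would establish the easy direction: if $C$ is closed under shifted Schur product, then $\Gamma_{A'}$ is already a lattice. For this, it suffices to show $\Gamma_{A'}$ is closed under addition (it visibly contains $0$ and is closed under negation since $-\psi(b) = \psi(b) - 2\psi(b)$ type manipulations stay inside, or more simply because $2^a\mathbb{Z}^n \subseteq \Gamma_{A'}$ absorbs sign issues modulo $2^a$). Take $\mathbf{c}_1, \mathbf{c}_2 \in C$ and consider $\Phi(\mathbf{c}_1) + \Phi(\mathbf{c}_2)$. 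The key computational lemma, generalizing \eqref{EquationAst}, is that for $\mathbf{w}, \mathbf{z} \in \mathcal{U}_a^n$ one has $\Phi(\mathbf{w}) + \Phi(\mathbf{z}) \equiv \Phi(\mathbf{w} + \mathbf{z}) + v\,\Phi(\mathbf{w} \ast \mathbf{z}) \pmod{v^a \mathcal{R}^n}$, where $\mathbf{w}+\mathbf{z}$ is the sum over $\mathcal{U}_a$ (i.e. bitwise XOR of coefficients) and $v\,\Phi(\mathbf{w}\ast\mathbf{z}) = \Phi((\mathbf{w}\ast\mathbf{z})u)$. I would verify this coefficient by coefficient, tracking carries exactly as in the binary case: adding $\psi(x_j)v^j + \psi(y_j)v^j$ produces a carry $\psi(x_jy_j)v^{j+1}$, and summing over $j$ the carry from level $j$ lands at level $j+1$, which is precisely the shift by $u$. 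Since $\mathbf{w}+\mathbf{z} \in C$ (linearity) and $(\mathbf{w}\ast\mathbf{z})u \in C$ (shifted Schur closure), both terms on the right are in $\Gamma_{A'}$, hence $\Phi(\mathbf{c}_1)+\Phi(\mathbf{c}_2) \in \Gamma_{A'}$, and adding $2^a\mathbb{Z}^n$ elements is harmless.

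For the converse, suppose $C$ is \emph{not} closed under shifted Schur product, so there exist $\mathbf{c}_1, \mathbf{c}_2 \in C$ with $(\mathbf{c}_1 \ast \mathbf{c}_2)u \notin C$. I want to exhibit a point of $\Lambda_{A'}$ (the lattice closure) that is not in $\Gamma_{A'}$. By the carry identity above, $v\,\Phi(\mathbf{c}_1 \ast \mathbf{c}_2) = \Phi((\mathbf{c}_1\ast\mathbf{c}_2)u) = \Phi(\mathbf{c}_1) + \Phi(\mathbf{c}_2) - \Phi(\mathbf{c}_1 + \mathbf{c}_2) + v^a(\text{integer vector})$, and the right side lies in the additive group generated by $\Gamma_{A'}$, hence in $\Lambda_{A'}$. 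So it remains to show $\Phi((\mathbf{c}_1\ast\mathbf{c}_2)u) \notin \Gamma_{A'}$. The obstruction must come from a reduction-mod-$v^a$ argument analogous to item \textit{i)} of Example \ref{Example}: every element of $\Gamma_{A'}$ reduces mod $v^a\mathcal{R}^n$ to an element of $\Phi(C)$ (since the $\Phi$-images of codewords have coefficient degree $\le a-1$, so entries lie in a bounded range, and the $v^a\mathcal{R}^n$ part is exactly what gets killed). Thus if $\Phi((\mathbf{c}_1\ast\mathbf{c}_2)u)$ were in $\Gamma_{A'}$, its mod-$v^a$ reduction would be $\Phi(\mathbf{c})$ for some $\mathbf{c} \in C$; but $\Phi$ is injective on the coefficient representatives and $\Phi((\mathbf{c}_1\ast\mathbf{c}_2)u)$ already has all coefficients of degree $\le a-1$, so we would be forced to have $(\mathbf{c}_1\ast\mathbf{c}_2)u = \mathbf{c} \in C$, a contradiction.

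The main obstacle I anticipate is making the mod-$v^a$ reduction argument fully rigorous, because $\Phi$ is not a ring homomorphism and the map "reduce $\Gamma_{A'}$ modulo $v^a\mathcal{R}^n$" needs to be shown well-defined and equal to $\Phi(C)$ as a set — i.e. that $\Phi(C)$ is already a complete set of coset representatives for $\Gamma_{A'}/v^a\mathcal{R}^n$, with no collapsing and no extra elements. This requires checking that $\Phi$ is a bijection from $\mathcal{U}_a^n$ (viewed via the standard degree-$<a$ representatives) onto the box $\{0,1,\dots,\}$-type region of $\mathcal{R}^n$ that forms representatives mod $v^a$, and that $C \subseteq \mathcal{U}_a^n$ maps onto a subgroup-like set — but $\Phi(C)$ need not be a group, which is the whole point, so one must phrase it as: $\Gamma_{A'} = \Phi(C) + v^a\mathcal{R}^n$ with the sum being a disjoint-on-representatives union. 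In the complex case $\mathcal{R} = \mathbb{Z}[i]$, $v = 1+i$, one should double-check the carry identity still holds, since $(1+i)^2 = 2i$ rather than a real multiple; but the theorem as stated restricts to the reals, so I would simply note that the proof is carried out for $\mathcal{R} = \mathbb{Z}$, $v = 2$, where the carry arithmetic is the familiar base-2 one and these subtleties are routine.
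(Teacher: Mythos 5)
Your key lemma is exactly the identity the paper establishes, namely
\[
\Phi(\mathbf{c}_1)+\Phi(\mathbf{c}_2)-\Phi(\mathbf{c}_1+\mathbf{c}_2)=\Phi((\mathbf{c}_1\ast\mathbf{c}_2)u)+2^a\mathbf{l},
\]
and your ``only if'' direction is sound and matches the paper's: the left side lies in the additive closure of $\Gamma_{A'}$, while the right side fails to reduce mod $2^a$ to an element of $\Phi(C)$ because $\Phi$ maps $\mathcal{U}_a^n$ bijectively onto the box $\{0,\ldots,2^a-1\}^n$ of coset representatives. The worry you raise about making that reduction rigorous is easily dispatched for exactly the reason you state, so that half is fine.

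The genuine gap is in the ``if'' direction. From the carry identity you get
\[
\Phi(\mathbf{c}_1)+\Phi(\mathbf{c}_2)=\Phi(\mathbf{c}_1+\mathbf{c}_2)+\Phi((\mathbf{c}_1\ast\mathbf{c}_2)u)+2^a\mathbf{l},
\]
and you conclude that since ``both terms on the right are in $\Gamma_{A'}$'' the sum is in $\Gamma_{A'}$. That is circular: the right side is a \emph{sum of two} elements of $\Phi(C)$ (plus something in $2^a\mathbb{Z}^n$), whereas an element of $\Gamma_{A'}=\Phi(C)+2^a\mathbb{Z}^n$ is $\Phi$ of a \emph{single} codeword plus a $2^a$-multiple. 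Closure of $\Gamma_{A'}$ under addition is precisely what is being proved, so you cannot invoke it here. A single application of the carry identity only propagates carries by one level; you must iterate. The paper closes this gap by a backward induction on the smallest degree of $\mathbf{c}_2$ (the least power of $u$ occurring in its entries): each application of the identity replaces the pair $(\mathbf{c}_1,\mathbf{c}_2)$ by the pair $(\mathbf{c}_1+\mathbf{c}_2,\,(\mathbf{c}_1\ast\mathbf{c}_2)u)$, and the second member's smallest degree strictly increases, so after at most $a$ steps the carry term is $\mathbf{0}$ in $\mathcal{U}_a$ and the sum collapses to a single $\Phi$-image plus an element of $2^a\mathbb{Z}^n$. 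Note that shifted Schur closure of $C$ is needed at \emph{every} step of this iteration to keep the carry term inside $C$. Your treatment of negation has the same flavor of issue (it relies on additive closure to form $(2^a-1)\Phi(\mathbf{c})$), but that is immediate once additive closure is in place.
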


\begin{proof}
We will first prove that if $\mathbf{c}_1$ and $\mathbf{c}_2$ are codewords of $C$, then
\begin{equation}\label{EquationSum}
\Phi(\mathbf{c}_1)+\Phi(\mathbf{c}_2)-\Phi(\mathbf{c}_1+\mathbf{c}_2)=\Phi((\mathbf{c}_1\ast\mathbf{c}_2)u)+2^a\mathbf{l}
\end{equation}
for some $\mathbf{l}\in\mathbb{Z}^n$. Write $\mathbf{c}_i$ as
\[\mathbf{c}_i=\mathbf{c}_{i,0}+\ldots+\mathbf{c}_{i,a-2}u^{a-2}+\mathbf{c}_{i,a-1}u^{a-1}\]
where $\mathbf{c}_{i,0},\ldots,\mathbf{c}_{i,a-2},\mathbf{c}_{i,a-1}\in\mathbb{F}_2^n$ and $i=1$ or 2; therefore,
\[\Phi(\mathbf{c}_i)=\psi(\mathbf{c}_{i,0})+\ldots+\psi(\mathbf{c}_{i,a-2})2^{a-2}+\psi(\mathbf{c}_{i,a-1})2^{a-1}.\]
Also,
\[\Phi(\mathbf{c}_1+\mathbf{c}_2)=\psi(\mathbf{c}_{1,0}+\mathbf{c}_{2,0})+\ldots+\psi(\mathbf{c}_{1,a-2}+\mathbf{c}_{2,a-2})2^{a-2}+\psi(\mathbf{c}_{1,a-1}+\mathbf{c}_{2,a-1})2^{a-1}.\]
Now, it follows from \eqref{EquationAst} that
\begin{align*}
\Phi(\mathbf{c}_1)+\Phi(\mathbf{c}_2)-\Phi(\mathbf{c}_1+\mathbf{c}_2) & =\psi(\mathbf{c}_{1,0}\ast\mathbf{c}_{2,0})2+\ldots+\psi(\mathbf{c}_{1,a-2}\ast\mathbf{c}_{2,a-2})2^{a-1} \\
& \quad+\psi(\mathbf{c}_{1,a-1}\ast\mathbf{c}_{2,a-1})2^a.
\end{align*}
Since $\Phi((\mathbf{c}_1\ast\mathbf{c}_2)u)=\psi(\mathbf{c}_{1,0}\ast\mathbf{c}_{2,0})2+\ldots+\psi(\mathbf{c}_{1,a-2}\ast\mathbf{c}_{2,a-2})2^{a-1}$, the desired result follows. \\

If $C$ is not closed under shifted Schur product, then there exist $\mathbf{c}_1,\mathbf{c}_2\in{}C$ such that $(\mathbf{c}_1\ast\mathbf{c}_2)u\notin{}C$. Since $\Phi(\mathbf{c}_1),\Phi(\mathbf{c}_2),\Phi(\mathbf{c}_1+\mathbf{c}_2)\in\Gamma_{A'}$ but $\Phi(\mathbf{c}_1)+\Phi(\mathbf{c}_2)-\Phi(\mathbf{c}_1+\mathbf{c}_2)$ does not reduce mod $2^a$ to an element in $\Phi(C)$, we conclude that $\Gamma_{A'}$ is not a lattice. \\

We are left to show that if $C$ is closed under Schur product then $\Gamma_{A'}=\Phi(C)+2^a\mathbb{Z}^n$ from Construction A$'$ is a lattice. It is not hard to see that $\Gamma_{A'}$ is discrete. Let $\mathbf{c}_1,\mathbf{c}_2\in{}C$. The smallest degree of $c_0+c_1u+\ldots+c_{a-1}u^{a-1}\in\mathcal{U}_a$ is the smallest $i$ such that $c_i\neq0$, with a convention that the smallest degree of $0$ is $a$. We say that $k$ is the smallest degree of $\mathbf{c}_2$ if the minimum of the smallest degree of entries of $\mathbf{c}_2$ is $k$. We will prove by backward induction on $k$ that $\Phi(\mathbf{c}_1)+\Phi(\mathbf{c}_2)\in\Gamma_{A'}$.

If the smallest degree of $\mathbf{c}_2$ is $a$ then $\mathbf{c}_2=\mathbf{0}$ and the result is obvious. If the smallest degree of $\mathbf{c}_2$ is $a-1$, then it follows from \eqref{EquationSum} that
\[\Phi(\mathbf{c}_1)+\Phi(\mathbf{c}_2)=\Phi(\mathbf{c}_1+\mathbf{c}_2)+\Phi((\mathbf{c}_1\ast\mathbf{c}_2)u)+2^a\mathbf{l}\]
for some $\mathbf{l}\in\mathbb{Z}^n$. Since $(\mathbf{c}_1\ast\mathbf{c}_2)u=\mathbf{0}$ in $\mathcal{U}_a$, we have that \[\Phi(\mathbf{c}_1)+\Phi(\mathbf{c}_2)=\Phi(\mathbf{c}_1+\mathbf{c}_2)+2^a\mathbf{l}\in\Gamma_{A'}.\]
Suppose now that the smallest degree of $\mathbf{c}_2$ is $0\leq{}k<a-1$. Again, we have
\[\Phi(\mathbf{c}_1)+\Phi(\mathbf{c}_2)=\Phi(\mathbf{c}_1+\mathbf{c}_2)+\Phi((\mathbf{c}_1\ast\mathbf{c}_2)u)+2^a\mathbf{l}\]
for some $\mathbf{l}\in\mathbb{Z}^n$. Since the smallest degree of $(\mathbf{c}_1\ast\mathbf{c}_2)u$ is $k+1$, we apply the induction hypothesis to $\mathbf{c}_1+\mathbf{c}_2$ and $(\mathbf{c}_1\ast\mathbf{c}_2)u$ and conclude that $\Phi(\mathbf{c}_1)+\Phi(\mathbf{c}_2)\in\Gamma_{A'}$. It readily follows that $\Gamma_{A'}$ is closed under addition.

Finally, for any $\mathbf{c}\in{}C$, we have
\[-\Phi(\mathbf{c})=(2^a-1)\Phi(\mathbf{c})-2^a\Phi(\mathbf{c})\in\Gamma_{A'}\]
since $(2^a-1)\Phi(\mathbf{c})\in\Gamma_{A'}$ and $\Phi(\mathbf{c})\in\mathbb{Z}^n$. \qed
\end{proof}

\remove{
\begin{theorem}
Let $C$ be a code over $\mathcal{U}_a$. Construction A$'$ yields a lattice if and only if $C$ is a concatenation of code(s) generated by a parity check matrix with a single row and code(s) of the form $C_0+uC_1+\ldots+u^{a-1}C_{a-1}$ where $C_0\subseteq{}C_1\subseteq\ldots\subseteq{}C_{a-1}\subseteq\mathbb{F}_2^n$ is closed under Schur product.
\end{theorem}

\begin{proof}
Since a concatenation of two lattices is a lattice, we first prove that if $C$ is generated by a parity check matrix with a single row or if $C$ can be written as a chain of binary codes which is closed under Schur product, then $\Gamma_{A'}$ is a lattice. \\

If a code $C$ over $\mathcal{U}_a$ is generated by a parity check matrix with a single row $\mathbf{b}$ of length $n$, then $C=\{u\mathbf{b}\mid{}u\in\mathcal{U}_a\}$ and Construction A$'$ yields
\[\Gamma_{A'}=\Phi(C)+2^a\mathbb{Z}^n=\{v\Phi(\mathbf{b})+2^a\mathbf{l}\mid{}v\in\mathbb{Z}\mbox{ and }\mathbf{l}\in\mathbb{Z}^n\}.\]
which is a lattice.

If a code $C$ can be written as $C_0+uC_1+\ldots+u^{a-1}C_{a-1}$ where $C_0\subseteq{}C_1\subseteq\ldots\subseteq{}C_{a-1}\subseteq{}\mathbb{F}_2^n$ is closed under Schur product, then it follows from Corollary \ref{Corollary} that Construction A$'$ yields a lattice.

Next, suppose that a code $C$ over $\mathcal{U}_a$ is not a concatenated code and suppose further that $C$ is neither generated by a parity check matrix with a single row nor a chain of binary codes which is closed under Schur product. We will prove that $\Gamma_{A'}=\Phi(C)+2^a\mathbb{Z}^n$ is not a lattice.

Note that a codeword $\mathbf{c}\in{}C$ can be written as
\[\mathbf{c}_0+\mathbf{c}_1u+\mathbf{c}_2u^2+\ldots+\mathbf{c}_{a-1}u^{a-1}\]
where $\mathbf{c}_0,\mathbf{c}_1,\ldots,\mathbf{c}_{a-1}\in\mathbb{F}_2^n$. Pick codewords $\mathbf{c}_1,\mathbf{c}_1\in{}C$ such that
\[\mathbf{c}_1=\mathbf{c}_{1,0}+\mathbf{c}_{1,1}u+\mathbf{c}_{1,2}u^2+\ldots+\mathbf{c}_{1,a-1}u^{a-1}\]
and
\[\mathbf{c}_2=\mathbf{c}_{2,0}+\mathbf{c}_{2,1}u+\mathbf{c}_{2,2}u^2+\ldots+\mathbf{c}_{2,a-1}u^{a-1}\]
where ???.

Then
\[\Phi(\mathbf{c}_1)+\Phi(\mathbf{c}_2)-\Phi(\mathbf{c}_1\boxplus\mathbf{c}_2)=(\mathbf{c}_{1,0}\ast\mathbf{c}_{2,0})2+(\mathbf{c}_{1,1}\ast\mathbf{c}_{2,1})4+\ldots+(\mathbf{c}_{1,a-1}\ast\mathbf{c}_{2,a-1})2^a\]
\end{proof}}

Proposition \ref{PropDbarA'}, Corollary \ref{Corollary}, and the above theorem relate in the following manners. Given a chain of binary codes $C_0\subseteq{}C_1\subseteq\ldots\subseteq{}C_{a-1}$,
\[C=C_0+uC_1+\ldots+u^{a-1}C_{a-1}\]
can be seen as a code over $\mathcal{U}_a$. Under such perspective, $\Phi(C)+2^a\mathbb{Z}^n$ from Construction A$'$ and the code formula $\psi(C_0)+2\psi(C_1)+\ldots+2^{a-1}\psi(C_{a-1})+2^a\mathbb{Z}^n$ coincide. If $C_0\subseteq{}C_1\subseteq\ldots\subseteq{}C_{a-1}$ is closed under Schur product, then $C$ is closed under shifted Schur product, and so $\Phi(C)+2^a\mathbb{Z}^n$ must be a lattice. Nonetheless, since the converse of the first statement is not true (i.e., a code over $\mathcal{U}_a$ is not necessarily a direct sum of binary codes), one may apply Construction A$'$ to such a code and obtain a lattice that is not constructible using Construction by Code Formula. The next example will demonstrate this presumption.

\begin{example}\label{ExampleL}
Consider a code $C$ of length 2 over $\mathcal{U}_3$ generated by a generator matrix
\[[
\begin{array}{cc}
1+u & 1+u+u^2
\end{array}].\]
Then, we have
\begin{align*}
C=\{ & (0,0),(u,u),(u^2,u^2),(u+u^2,u+u^2), \\
& (1,1+u^2),(1+u^2,1),(1+u,1+u+u^2),(1+u+u^2,1+u)\},
\end{align*}
and it is not hard to see that $C$ is closed under shifted Schur product. We apply Construction A$'$ over the reals and obtain the lattice
\[\{(0,0),(2,2),(4,4),(6,6),(1,5),(5,1),(3,7),(7,3)\}+8\mathbb{Z}^2\]
as shown in Figure 1.
\begin{figure}\label{FigureLattice}
\begin{center}
\begin{picture}(125,125)
\multiput(0,0)(15,0){9}{\line(0,1){120}}
\multiput(0,0)(0,15){9}{\line(1,0){120}}
\put(0,0){\circle*{5}}
\put(0,120){\circle*{5}}
\put(120,0){\circle*{5}}
\put(120,120){\circle*{5}}
\put(15,75){\circle*{5}}
\put(30,30){\circle*{5}}
\put(45,105){\circle*{5}}
\put(60,60){\circle*{5}}
\put(75,15){\circle*{5}}
\put(90,90){\circle*{5}}
\put(105,45){\circle*{5}}
\end{picture}
\caption{Lattice from Example \ref{ExampleL}}
\end{center}
\end{figure}
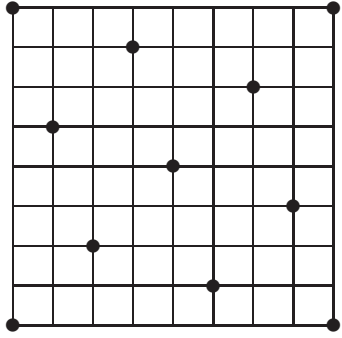
Note that this lattice is not constructible by Construction D, Construction D$'$, or Construction by Code Formula. {However, it was mentioned by a reviewer that this lattice can be obtained via Construction A using the 8-ary code $C=\langle(\bar{1},\bar{5})\rangle\subseteq\mathbb{Z}_8^2$.}
\end{example}

\section{Conclusion}

\begin{figure}\label{FigDiagram}
\begin{center}
\begin{picture}(315,150)
\put(10,40){\line(1,0){80}}
\put(10,40){\line(0,1){20}}
\put(10,60){\line(1,0){80}}
\put(90,40){\line(0,1){20}}
\put(20,47){Code Formula}
\put(10,30){- produces lattices iff codes are}
\put(16,20){closed under Schur product${}^1$}
\put(205,40){\line(1,0){80}}
\put(205,40){\line(0,1){20}}
\put(205,60){\line(1,0){80}}
\put(285,40){\line(0,1){20}}
\put(215,47){Construction A$'$}
\put(205,30){- uses codes over $\mathbb{F}_2[u]/u^a$}
\put(205,20){- produces lattices iff the}
\put(211,10){code is closed under}
\put(211,0){shifted Schur product}
\put(10,120){\line(1,0){80}}
\put(10,120){\line(0,1){20}}
\put(10,140){\line(1,0){80}}
\put(90,120){\line(0,1){20}}
\put(20,127){Construction D}
\put(10,110){- always produces lattices}
\put(205,120){\line(1,0){80}}
\put(205,120){\line(0,1){20}}
\put(205,140){\line(1,0){80}}
\put(285,120){\line(0,1){20}}
\put(215,127){Construction D$'$}
\put(205,110){- always produces lattices}
\put(0,50){\vector(1,0){10}}
\put(0,50){\line(0,1){80}}
\put(0,130){\vector(1,0){10}}
\put(4,75){equivalent iff ${}^1$ holds}
\put(90,50){\vector(1,0){115}}
\put(99,54){can be constructed using}
\put(90,130){\line(1,0){115}}
\put(147,130){\vector(0,-1){28}}
\put(90,60){\vector(3,2){30}}
\put(100,95){produce Barnes-Wall lattices}
\put(100,85){from Reed-Muller codes}
\end{picture}
\caption{Relationships between the four constructions of lattices}
\end{center}
\end{figure}
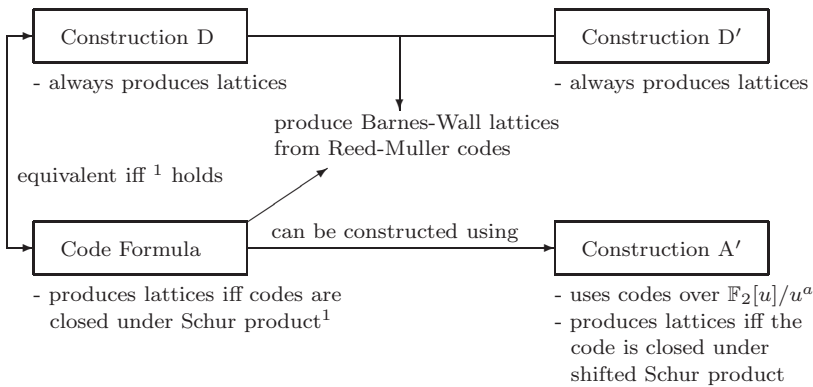

{Figure 2 summarizes the connections} between Construction D, Construction D$'$, and Construction by Code Formula of lattices from nested binary linear codes and Construction A$'$ of lattices from codes over the polynomial ring $\mathbb{F}_2[u]/u^a$. While Construction by Code Formula and Construction A$'$ are useful in the construction of Barnes-Wall lattices, we demonstrate that they may generate nonlattice packings. We give an exact condition for the two constructions to yield a lattice and further relate Construction by Code Formula to Construction D. Future work involves studying complex counterparts of Construction by Code Formula and Construction A$'$, along with possible properties and parameters of lattices from these constructions. {Another direction is to derive a relationship between code formula and Construction D$'$ in a similar fashion to code formula and Construction D. In addition, we see that binary codes can be combined to create a code over $\mathcal{U}_a$ where the two objects produce an identical lattice. It would be interesting to see if binary codes can be weaved together to create a code over $\mathbb{Z}_{2^a}$ in such a way that lattices constructed from such a code correspond to lattices from the multilevel constructions.}

\begin{acknowledgements}
The authors would like to thank Cong Ling and Jagadeesh Harshan for their helpful comments and suggestions.
\end{acknowledgements}


\begin{thebibliography}{}
%
%
\bibitem{BDHO}
E. Bannai, S. T. Dougherty, M. Harada, and M. Oura, Type II Codes, Even Unimodular Lattices, and Invariant Rings, IEEE Trans. Inform. Theory \textbf{45} (1999), no. 4, 1194--1205.
\bibitem{BS}
E. S. Barnes and N. J. A. Sloane, New Lattice Packings of Spheres, Can. J.  Math. \textbf{35} (1983), no. 1, 117--130.
\bibitem{BSC}
A. Bonnecaze, P. Sole, and A. R. Calderbank, Quaternary Quadratic Residue Codes and Unimodular Lattices, IEEE Trans. Inform. Theory \textbf{41} (1995), no. 2, 366--377.
\bibitem{CS}
J. H. Conway, and N. J. A. Sloane, Sphere Packings, Lattices, and Groups, Third Edition, 1998, Springer-Verlag, New York.
\bibitem{F1}
G. D. Forney, Coset Codes--Part I: Introduction and Geometrical Classification, IEEE Trans. Inform. Theory \textbf{34} (1988), no. 5, 1123--1151.
\bibitem{F2}
G. D. Forney, Coset Codes--Part II: Binary Lattices and Related Codes, IEEE Trans. Inform. Theory \textbf{34} (1988), no. 5, 1152--1187.
\bibitem{FTC}
G. D. Forney, M. D. Trott, S.-Y. Chung, Sphere-Bound-Achieving Coset Codes and Multilevel Coset Codes, IEEE Trans. Inform. Theory \textbf{46} (2000), no. 3, 820--850.
\bibitem{HVB_ISIT}
J. Harshan, E. Viterbo, and J.-C. Belfiore, Construction of Barnes-Wall Lattices from Linear Codes over Rings, Proc. IEEE Int. Symp. Inform. Theory, Cambridge, MA, pp. 3110-3114, July 1-6, 2012.
\bibitem{HVB}
J. Harshan, E. Viterbo, and J.-C. Belfiore, Practical Encoders and Decoders for Euclidean Codes from Barnes-Wall Lattices, available on \url{arXiv:1203.3282v2} [cs.IT], March 2012.
\bibitem{KO}
W. Kositwattanarerk and F. Oggier, On Construction D and Related Constructions of Lattices from Linear Codes, Proc. of the Int. Workshop on Coding and Cryptography, Bergen, Norway, pp. 428-437, April 15-19, 2013.
\bibitem{OSB}
F. Oggier, P. Sol\'e, J.-C. Belfiore, Lattice Codes for the Wiretap Gaussian Channel: Construction and Analysis, available on \url{arXiv:1103.4086}.
\bibitem{SBP}
M.-R. Sadeghi, A. H. Banihashemi, and D. Panario, Low-Density Parity-Check Lattices: Construction and Decoding Analysis, IEEE Trans. Inform. Theory \textbf{52} (2006), no. 10, 4481--4495.
\bibitem{SSP}
A. Sakzad, M.-R. Sadeghi, and D. Panario, Turbo Lattices: Construction and Error Decoding Performance, submitted to IEEE Trans. on Inform. Theory, available on \url{arXiv:1108.1873v3}, September 2012.
\bibitem{YLW}
Y. Yan, C. Ling, and X. Wu, Polar Lattices: Where Arikan Meets Forney, Proc. IEEE Int. Symp. Inform. Theory, Istanbul, Turkey, pp. 1292-1296, July 7-12, 2013.

\end{thebibliography}


\end{document}